%% file: main.tex
\documentclass{article}

\usepackage[nature]{arxiv}

\usepackage[normalem]{ulem}    % for \sout{} in Table 8
\usepackage{fontawesome5}       % for \faEnvelope in author block

\newtheorem{theorem}{Theorem}

\title{Optimizing AI Inference Across\\ the Deployment Stack}

\author{%
  \parbox{\linewidth}{\centering
    \authororcid{Tejinder~Singh\textsuperscript{1,$\dagger$}}{0000-0002-5870-6204}{\normalfont ,}~
    John~Pflueger\textsuperscript{2}{\normalfont ,}~
    \authororcid{Jeebak~Mitra\textsuperscript{3}}{0000-0003-4568-2589}{\normalfont ,}~
    Robert~Lincourt\textsuperscript{4}{\normalfont ,}\\[0.3em]
    Mitchell~Markow\textsuperscript{2}{\normalfont , and}~
    \authororcid{Bhavesh~A.~Patel\textsuperscript{2}}{0009-0007-8166-5686}\\[0.5em]
    \normalfont
    \textsuperscript{1}Dell Technologies, Santa Clara, CA, USA\\
    \textsuperscript{2}Dell Technologies, Round Rock, TX, USA\\
    \textsuperscript{3}Dell Technologies, Ottawa, Canada\\
    \textsuperscript{4}Dell Technologies, Hopkinton, MA, USA\\[0.35em]
    \textsuperscript{$\dagger$}\Letter\ \footnotesize
    \href{mailto:Singh.Tejinder@Dell.com}{\texttt{Singh.Tejinder@Dell.com}}%
  }%
}

\date{}

\renewcommand{\headeright}{}
\renewcommand{\undertitle}{}
\renewcommand{\shorttitle}{\footnotesize {\textsc{\textbf{T. Singh}: Optimizing AI Inference Across the Deployment Stack}}}

\hypersetup{
  pdftitle={Optimizing AI Inference Across the Deployment Stack},
  pdfsubject={AI inference deployment optimization},
  pdfauthor={Tejinder Singh},
  pdfkeywords={AI inference, deployment optimization, compiler, runtime, serving systems, roofline model}
}
\newif\ifshowtoc
\showtocfalse
\begin{document}
\maketitle

\begin{abstract}
The performance of a deployed AI model is determined not by its architecture alone but by the interaction among model compression, compiler transformations, and serving system policies. Published benchmarks frequently obscure this reality by reporting latency and throughput numbers under incomparable conditions, making it difficult for practitioners to translate research findings into deployment decisions. This paper provides a unified analytical treatment of inference optimization across the full deployment stack. We introduce a three layer taxonomy that decomposes deployment into model level techniques (quantization, pruning, distillation), compiler level transformations (graph fusion, layout optimization, kernel autotuning), and system level policies (dynamic batching, admission control, memory tiering). We formalize deployment as a constrained multi-objective optimization problem over accuracy, latency, throughput, memory footprint, and energy, and derive analytical properties of a deployment ranking functional including monotonicity under Pareto dominance and scale invariance. We ground the analysis in roofline models that expose how memory bandwidth hierarchies bound attainable performance across precision regimes, and in queuing models that quantify the nonlinear amplification of service time into end-to-end response time under load. To address the comparability gap in existing literature, we propose a disciplined evidence protocol that distinguishes measured, derived, and analytical claims, restricts literal numerical comparison to within-paper results, and requires explicit reporting of hardware configuration, software versions, batch semantics, and thermal state. We synthesize representative evidence spanning edge platforms (Jetson AGX Orin, five inference frameworks) and data center GPUs (A100, H100, three LLM serving engines), supplemented by large-scale quantization evaluations across the Llama-3.1 model family, and demonstrate that deployment outcomes are governed by cross-layer interactions that no single-layer analysis can predict. The paper concludes with a constraint-aware selection procedure and an enumeration of open problems in joint compiler-serving co-optimization, cross-hardware performance prediction, and standardized energy reporting.
\end{abstract}

\keywords{AI inference \and Deployment optimization \and Inference compilers \and Hardware acceleration \and Serving systems \and Roofline model}

\ifshowtoc
\ruledtableofcontents
\fi

%---------------------------------------------------------------------------
% Section: 1
%---------------------------------------------------------------------------
\section{Introduction}

A trained AI model, no matter how accurate on a validation set, is not deployable until it executes within the latency, throughput, memory, and energy budget of its target hardware under realistic serving conditions. This seemingly straightforward requirement has proven remarkably difficult to satisfy in practice. The same ResNet-152 architecture evaluated on an NVIDIA Jetson AGX Orin exhibits a $125\times$ spread in measured latency across five mainstream inference frameworks, from 2.3\,ms under TensorRT to 285\,ms under ONNX Runtime, while accuracy varies by up to 4.6 percentage points across the same set of runtimes \cite{ratul2025electronics}. These are not edge cases. They are representative of the deployment reality confronting every engineering team that moves a model from a research notebook to a production endpoint.

The root cause is that inference performance emerges from the full deployment stack, not from any single layer within it. Figure~\ref{fig:taxonomy} presents this stack as an end to end pipeline: a trained model passes through compression and quantization at the model level, graph rewriting and kernel code generation at the compiler level, and batching, admission control, and memory management at the system level. A feedback path driven by deployment telemetry (observed accuracy, latency percentiles, energy draw) closes the loop, enabling iterative retuning across layers. Each stage in this pipeline can improve or degrade the final outcome, and the interactions between stages are frequently nonlinear. Quantization at the model layer is useful only if the compiler emits efficient low precision kernels. Operator fusion at the compiler layer changes memory access patterns in ways that alter the effectiveness of system level batching. No single layer optimization can be evaluated in isolation.

\input{figures/fig_taxonomy}

The systems community has invested heavily in specialized inference compilers and runtimes to bridge the gap between high level model definitions and bare metal execution. Apache TVM, XLA, TensorRT, and ONNX Runtime each apply a distinct combination of graph level transformations (fusion, constant folding, dead code elimination) and backend specific kernel generation strategies \cite{li2020compiler}. Recent tensor compilers such as ROLLER and Souffl\'{e} push further by employing machine learning driven autotuning across the enormous search space of memory layouts, tiling parameters, and loop orderings \cite{zhu2022roller, xia2024souffle}. Yet even with these advances, predicting which compiler configuration will yield the lowest latency on a given accelerator for a given model remains an open problem, partly because the compilers themselves are evolving rapidly and partly because the interaction between precision format, kernel library version, and hardware microarchitecture is combinatorially complex.

At the model layer, compression has shifted from an optional post-training step to a mandatory prerequisite for deployment at scale. INT8 and INT4 quantization reduce both memory footprint and arithmetic cost, often by $2\text{--}4\times$, while carefully calibrated schemes preserve accuracy within a fraction of a percentage point \cite{gholami2021quant}. Structural pruning removes entire channels or attention heads, altering the computational graph in ways that propagate through the compiler and runtime layers \cite{liang2021prunequant}. The practical impact of these techniques, however, depends entirely on whether the target hardware exposes efficient low precision datapaths. On accelerators that lack native INT4 support, the runtime silently upcasts to FP16 and the expected throughput gain vanishes.

Underpinning all of this is a physical constraint that no amount of software optimization can circumvent: the memory bandwidth hierarchy of the target device. Figure~\ref{fig:roofline} makes this concrete using the Roofline model \cite{williams2009roofline} instantiated for an NVIDIA A100 SXM4. Three memory tiers (L1/SMEM at 19\,TB/s, L2 at 6\,TB/s, HBM at 2\,TB/s) define ascending bandwidth slopes, and the FP16 Tensor Core ceiling at 312\,TFLOP/s establishes the compute bound. Embedding lookups and vanilla attention sit deep in the memory bound regime, achieving only a fraction of the HBM bandwidth limit. FlashAttention, by tiling its computation to exploit L2 residency, operates above the HBM slope, demonstrating that algorithmic redesign can shift a kernel across memory hierarchy tiers. Large batch MLP and convolution kernels, with high operational intensity, land in the compute bound region where throughput is limited by arithmetic peak rather than data movement. The practical implication is immediate: optimizing a memory bound kernel by improving its arithmetic efficiency achieves nothing, while optimizing a compute bound kernel by reducing its memory traffic achieves nothing. Knowing which regime a workload occupies is a prerequisite for choosing the right optimization lever.

\input{figures/fig_roofline}

Deployment is also inherently multi-objective. Practitioners do not optimize latency alone; they optimize latency subject to accuracy, memory, and energy constraints. The space of possible deployment configurations (precision format $\times$ compiler backend $\times$ batch size $\times$ hardware target) is vast, and the majority of evaluated configurations turn out to be Pareto dominated, meaning at least one other configuration achieves both higher accuracy and higher throughput. Figure~\ref{fig:pareto} illustrates this structure using representative deployment configurations. Five non-dominated configurations spanning FP32 through W4A4 define the Pareto frontier, while nineteen dominated alternatives scatter below it. The specific coordinates are illustrative; the figure demonstrates the geometric structure of Pareto selection rather than reporting measured data from a single study. Horizontal and vertical feasibility constraints ($A_{\min}$, $Q_{\min}$) further restrict the viable operating region. The practical consequence is that deployment selection reduces to a constrained optimization over a well defined feasible set, not an open ended search across all possible configurations.

\input{figures/fig_pareto}

Despite this structure, the existing literature makes it difficult to perform such optimization rigorously. Published benchmarks, such as frequently report ``latency'' without specifying whether the measurement refers to single sample kernel time, median batch latency, or end to end service response time including queuing delay. Throughput may mean offline saturation throughput or online request completion rate under an SLA. Power measurements may cover the accelerator alone or the full platform. The MLPerf Inference benchmark \cite{reddi2020mlperf} was designed in part to address these ambiguities, but many published comparisons predate or do not follow its methodology, and even MLPerf compatible results require careful interpretation when hardware generations, software stacks, or cooling configurations differ. One of the recent work compiled by MLCommons Science WG on AI benchmark democratization and carpentry highlighted this as a multi-faceted problem stating that continuous adaptive benchmarking frameworks are needed to align scientific assessment with deployment risks~\cite{vonlaszewski2025ai}.

This paper addresses the gap between fragmented single layer studies and the cross layer reality of deployment engineering. We make four contributions. First, we introduce a three layer deployment taxonomy and formalize the pipeline through which a trained model becomes an operational inference endpoint. Second, we develop roofline grounded and queuing aware analytical models that connect observed performance gaps to first principles of hardware physics and system dynamics. Third, we propose a disciplined comparability protocol for interpreting published evidence, distinguishing measured, derived, and analytical claims and restricting literal numerical comparison to within-study results. Fourth, we synthesize representative empirical evidence across five inference frameworks on embedded and data center hardware \cite{ratul2025electronics, reddi2020mlperf} and distill a constraint-aware deployment selection procedure that practitioners can apply directly.

% Note: The three \input{} commands above reference the figure .tex files
% maintained in the Figures subpages.

%---------------------------------------------------------------------------
% Section: 2
%---------------------------------------------------------------------------
\section{Scope and Methodology}

This section establishes the boundaries of the paper, defines the research questions that organize the subsequent analysis, and specifies the rules by which we evaluate and compare published evidence. We state these upfront because deployment literature is unusually prone to misleading comparison: two papers can both claim to measure ``inference latency'' while referring to fundamentally different quantities, and two systems can both report ``state of the art throughput'' on workloads that share nothing beyond the model family name.

\subsection{Research Questions}

Five questions guide the structure and scope of this paper.

\begin{enumerate}[leftmargin=1.5em]
  \item Which optimization levers at the model, compiler, and system layers materially change deployment outcomes, and which produce gains that vanish under realistic serving conditions?
  \item Under what conditions are published latency, throughput, and energy measurements directly comparable across studies, and when must they be treated as qualitative indicators only?
  \item What analytical models (roofline, queuing, multi-objective) provide explanatory and predictive power for observed deployment behavior?
  \item Which cross-layer interactions recur across edge, server, and multi-tenant deployment contexts, and which are platform specific?
  \item What open problems remain unresolved despite the current benchmark and systems literature?
\end{enumerate}

\subsection{Inclusion and Exclusion Criteria}

We restrict our evidence base to sources satisfying at least one of three conditions: (i) they report explicit empirical measurements of inference latency, throughput, accuracy, power, memory footprint, or compilation time on named hardware with stated software versions; (ii) they introduce compiler, runtime, or serving systems with quantitative deployment evidence on at least one hardware target; or (iii) they provide survey-level coverage of a deployment-relevant subfield with sufficient methodological transparency to assess comparability.

We exclude purely training-focused work, architecture search papers that do not evaluate deployment cost, and sources that report framework performance without specifying batch size, precision format, or measurement methodology. We also exclude vendor white papers and marketing materials that do not disclose experimental methodology, even when they contain numerical claims that appear precise.

This policy is deliberately conservative. Deployment measurements are sensitive to thermal throttling, memory fragmentation, OS scheduling jitter, driver version, CUDA toolkit version, and the order in which benchmark runs execute. Restricting the evidence base to sources that report enough metadata to assess these factors reduces the risk of treating incomparable results as directly contestable.

\subsection{Evidence Taxonomy}

We classify every claim in this paper into one of three categories.

\paragraph{Measured evidence.} A numerical value reported directly by a source, obtained through instrumentation of a physical system under stated conditions. Example: ``ResNet-152 achieves 2.282\,ms latency under TensorRT on Jetson AGX Orin'' \cite{ratul2025electronics}.

\paragraph{Derived evidence.} A quantity computed from two or more measured values using an explicitly stated formula. Example: energy per inference $E = P \cdot T$, where $P$ is mean measured power and $T$ is mean measured latency. Derived evidence is only as reliable as the weakest measurement it depends on.

\paragraph{Analytical evidence.} A claim based on formal reasoning, mathematical modeling, or first principles analysis rather than direct measurement. Example: the roofline lower bound on inference time (Eq.~\ref{eq:roofline_bound}) or the queuing amplification relationship between service time and response time. Analytical evidence is valid to the extent that its modeling assumptions hold.

Maintaining this distinction throughout the paper serves a specific purpose: it prevents the reader from treating a derived energy estimate with the same confidence as a directly measured latency, and it prevents analytical bounds from being quoted as empirical results.

\subsection{Comparability Policy}

The most consequential methodological decision in this paper is the following rule:

\begin{quote}
\textit{Only within-paper numerical comparisons are treated as literal. Cross-paper comparisons are used qualitatively unless the workloads, hardware, software versions, batch semantics, and metric definitions are explicitly aligned.}
\end{quote}

This rule matters because the deployment literature is saturated with superficially comparable numbers that measure different things. Table~\ref{tab:comparability} enumerates the most common sources of incomparability.

\begin{table}[t]
\centering
\caption{Common sources of incomparability in deployment benchmarks.}
\label{tab:comparability}
\footnotesize
\begin{tabularx}{0.7\linewidth}{>{\raggedright\arraybackslash}p{2.7cm}X}
\toprule
Factor & How it breaks comparability \\
\midrule
Latency definition & Single-sample kernel time vs.\ median batch latency vs.\ p99 end-to-end response time including queuing delay \\
Batch semantics & Static batch = 1 vs.\ dynamic batching vs.\ continuous batching; offline vs.\ online \\
Precision format & FP32 vs.\ FP16 vs.\ INT8 vs.\ mixed precision; post-training vs.\ quantization-aware training \\
Warmup and caching & Cold start vs.\ warm cache; number of warmup iterations; JIT compilation overhead included or excluded \\
Power scope & Accelerator-only TDP vs.\ full platform draw including CPU, DRAM, and cooling \\
Thermal state & Sustained workload under thermal throttling vs.\ burst measurement before throttling onset \\
Software version & Driver version, CUDA toolkit, framework release; kernel library updates can shift performance by $10$--$30\%$ between minor versions \\
\bottomrule
\end{tabularx}
\end{table}

The MLPerf Inference benchmark \cite{reddi2020mlperf} was designed to standardize several of these factors by prescribing specific scenarios (single-stream, multi-stream, server, offline), quality targets, and reporting requirements. We use MLPerf methodology as a reference standard where applicable, but we note that many of the studies synthesized in this paper predate MLPerf or do not conform to its scenario definitions. In those cases, we report the original authors' numbers and flag the comparability limitations explicitly.

\subsection{Relation to Prior Work}

Existing surveys and systems papers address adjacent topics. Li et al.\ \cite{li2020compiler} provide comprehensive coverage of AI compiler design, focusing on intermediate representation hierarchies and backend optimization passes. Gholami et al.\ \cite{gholami2021quant} survey quantization methods with emphasis on accuracy preservation. Liang et al.\ \cite{liang2021prunequant} survey pruning and quantization jointly. On the systems side, Kwon et al.\ \cite{kwon2023vllm} introduce PagedAttention for efficient KV cache management in LLM serving, and Kurtic et al.\ \cite{kurtic2024bf16} conduct large-scale quantization evaluations across the Llama-3.1 model family. Each of these addresses a specific layer of the deployment stack. None treats deployment as an integrated optimization problem across all three layers simultaneously.

The present paper differs from prior surveys in three respects. First, we organize the entire analysis around the deployment pipeline rather than around individual techniques. Second, we develop analytical models (roofline, queuing, multi-objective ranking) that provide explanatory power beyond empirical tabulation. Third, we enforce a comparability protocol that makes the evidential status of every claim transparent. The goal is not to supersede existing technique-focused surveys but to provide the integrative framework within which their results can be interpreted in the context of actual deployment decisions.

%---------------------------------------------------------------------------
% Section: 3
%---------------------------------------------------------------------------
\section{Deployment Taxonomy}

The three layer decomposition introduced in Section~1 provides a structural skeleton. This section fills it in by cataloging the concrete optimization techniques available at each layer, explaining their mechanisms, and identifying the cross-layer dependencies that determine whether a given technique delivers its theoretical benefit in practice.

\subsection{Model Level Optimization}

Model level techniques alter the statistical or structural properties of the trained network before it reaches the compiler. They reduce the computational and memory cost of inference while attempting to preserve task accuracy.

\paragraph{Quantization.}
Quantization replaces floating point weights and activations with lower precision representations. Post-training quantization (PTQ) calibrates scale and zero-point parameters on a small representative dataset without retraining \cite{gholami2021quant}. Quantization-aware training (QAT) simulates quantization noise during the training loop itself, producing models that tolerate lower bit widths more gracefully at the cost of additional training compute.

What the quantization literature frequently omits is that the realized deployment benefit depends almost entirely on factors outside the model layer. The throughput gain materializes only if the target accelerator exposes native low precision arithmetic units. On hardware with dedicated INT8 or INT4 tensor cores, quantization can deliver $2\text{--}4\times$ throughput improvements. On hardware that lacks these units, the runtime silently upcasts quantized tensors to FP16 or FP32, and the throughput gain collapses to whatever savings come from the reduced memory footprint alone. The accuracy impact is equally entangled with the compiler: aggressive operator fusion can change the numerical accumulation order, amplifying or mitigating quantization error in ways that are invisible to the model layer. Figure~\ref{fig:quant_effect} makes the roofline consequence concrete. Reducing precision from FP32 to INT8 halves the bytes per element, approximately doubling operational intensity and shifting the kernel rightward toward the compute bound regime.

\input{figures/fig_quant_effect}

\paragraph{Pruning.}
Pruning removes redundant parameters. Unstructured pruning zeros out individual weights, producing sparse tensors. Structured pruning removes entire channels, attention heads, or transformer blocks, yielding smaller dense subnetworks \cite{liang2021prunequant}.

The critical deployment detail that pruning papers routinely understate is hardware compatibility. A model pruned to 90\% unstructured sparsity looks impressive on paper but shows no latency improvement on a GPU that lacks efficient sparse GEMM kernels. The sparsity is invisible to the dense execution engine. Structured pruning sidesteps this by producing smaller dense graphs that run on standard tensor cores without specialized library support, but it typically removes less total compute for the same accuracy budget. The net effect is that practitioners must choose between high compression with uncertain hardware acceleration and moderate compression with guaranteed dense execution.

\paragraph{Knowledge Distillation.}
Distillation trains a smaller student model to mimic the output distribution of a larger teacher. Unlike pruning and quantization, which modify a fixed architecture, distillation changes the architecture itself. The student is a fundamentally different computational graph with different operational intensity, different memory access patterns, and different compiler optimization opportunities. From a deployment standpoint, this makes distillation the most disruptive technique in the model level toolkit. It does not compress the existing model. It replaces it.

\paragraph{Conditional Computation and Mixture of Experts.}
Mixture of Experts (MoE) architectures activate only a subset of parameters for each input token, reducing per-token compute while maintaining total model capacity \cite{fedus2022switch}. The deployment complications are substantial. Routing decisions create load imbalance across expert partitions, and the sparse activation pattern means that memory footprint remains proportional to total parameters even though per-token compute scales with the number of active experts. Pipeline and tensor parallelism strategies designed for dense models do not transfer directly. MoE deployment is a qualitatively different problem that the serving system must handle with specialized placement and scheduling logic.

\subsection{Compiler Level Optimization}

Compiler level techniques transform the computational graph emitted by the model framework into executable code optimized for the target hardware. The input is a high level graph (typically an ONNX, TorchScript, or StableHLO intermediate representation); the output is a sequence of hardware specific kernel invocations.

\paragraph{Graph Rewriting and Operator Fusion.}
Operator fusion is the single most impactful compiler optimization for inference. It combines multiple graph nodes into one kernel launch, eliminating intermediate tensor materializations and the HBM round trips they require. A fused attention kernel computes $\text{softmax}(QK^T/\sqrt{d})V$ in a single pass over SRAM rather than writing and reading three separate intermediate tensors through HBM \cite{li2020compiler}.

The roofline interpretation is direct. By reducing bytes moved ($B(c)$), fusion shifts the kernel rightward along the operational intensity axis. A memory bound attention kernel that previously sat on the HBM slope can, after fusion, land near or beyond the ridge point, entering the compute bound regime where throughput is limited by arithmetic peak rather than data movement. Other graph rewrites, including constant folding, NCHW to NHWC layout transformation, and dead code elimination, produce smaller and more targeted gains but are routinely applied by all major compilers. Figure~\ref{fig:compiler_pipeline} traces the full transformation pipeline.

\input{figures/fig_compiler_pipeline}

\paragraph{Kernel Selection and Autotuning.}
For each fused operator the compiler must choose a concrete implementation. The search space is enormous. A single matrix multiplication on a modern GPU admits thousands of valid tile configurations, each with different performance characteristics depending on the matrix dimensions and available shared memory \cite{zhu2022roller}. TVM's AutoTVM and Ansor subsystems, NVIDIA's cuDNN heuristic selector, and the ROLLER fast-compilation approach all implement variants of a profile-and-select strategy \cite{zhu2022roller, xia2024souffle}.

The practical limitation is that autotuning results are hardware specific. A kernel configuration tuned for an A100 may perform poorly on an H100 due to differences in shared memory size, warp scheduler behavior, and tensor core microarchitecture. Cross-hardware transferability of tuned configurations remains an open problem \cite{lee2021help}.

\paragraph{Ahead of Time versus Just in Time Compilation.}
TensorRT performs all optimization offline, producing a static engine file optimized for a fixed input shape and precision. TVM and XLA support both ahead-of-time (AOT) and just-in-time (JIT) modes. The trade off is straightforward: AOT amortizes compilation cost but cannot adapt to variable input shapes at runtime, while JIT can specialize kernels on the fly at the cost of first-request latency spikes and GPU memory consumed by the compilation process itself.

\subsection{System Level Optimization}

System level techniques manage the resources, scheduling, and request handling of the inference serving infrastructure. Their domain begins where the compiler's domain ends: the compiled model binary is given, and the system must execute it efficiently under a stream of incoming requests.

\paragraph{Batching.}
Batching amortizes kernel launch, memory allocation, and PCIe transfer overhead across multiple requests. Static batching waits for a fixed number of requests before launching a forward pass. Dynamic batching assembles variable-size batches from a request queue. Continuous batching, introduced by Orca \cite{yu2022orca}, inserts and removes individual requests from an in-flight batch at the granularity of a single decoding step, keeping the GPU near full occupancy even when requests have wildly different sequence lengths.

These are not interchangeable. Throughput measured under static batching with batch size 32 is a fundamentally different quantity than throughput under continuous batching with the same model binary on the same GPU. This is one of the comparability violations cataloged in Table~\ref{tab:comparability}, and it is among the most frequently ignored.

\paragraph{Admission Control and Scheduling.}
Under high load the serving system must decide which requests to accept and which to reject or defer. Admission control enforces SLA compliance by shedding load before tail latency violations occur. Priority scheduling assigns different latency budgets to different request classes. Both mechanisms interact tightly with the queuing dynamics formalized in Section~5. Near saturation, a 5\% reduction in admitted request rate can cut p99 latency by 30\% or more.

\paragraph{Memory Tiering and Model Placement.}
Large models that exceed a single accelerator's HBM capacity must be partitioned across multiple devices. Tensor parallelism splits individual operators across GPUs, pipeline parallelism assigns different layers to different GPUs, and expert parallelism distributes MoE expert shards. Each partitioning strategy introduces inter-device communication overhead that directly increases per-request latency. The optimal partition depends on the model architecture, the interconnect topology (NVLink, PCIe, InfiniBand), and the batch size, creating yet another cross-layer optimization surface.

For models that fit in HBM but benefit from reduced memory traffic, kernel-level SRAM management (as exploited by FlashAttention \cite{dao2022flashattention}) constitutes a form of memory tiering at the compiler-system boundary. The compiler determines the tiling strategy; the runtime manages the SRAM allocation.

\subsection{Cross Layer Entanglement}

The taxonomy above separates techniques by layer for expository clarity, but in practice the layers are entangled. Table~\ref{tab:entanglement} summarizes the most consequential cross-layer interactions.

\begin{table}[t]
\centering
\caption{Representative cross-layer interactions in the deployment stack.}
\label{tab:entanglement}
\footnotesize
\begin{tabularx}{0.7\linewidth}{>{\raggedright\arraybackslash}p{1.8cm}>{\raggedright\arraybackslash}p{1.8cm}X}
\toprule
Layer A & Layer B & Interaction \\
\midrule
Quantization (model) & Kernel selection (compiler) & INT8 quantization improves throughput only if the compiler dispatches to native INT8 tensor cores; otherwise it upcasts to FP16 \\
Pruning (model) & Fusion (compiler) & Structured pruning changes operator dimensions, invalidating previously tuned tile sizes and requiring re-autotuning \\
Distillation (model) & Batching (system) & A smaller distilled model has lower per-request latency, shifting the queuing operating point and enabling higher throughput under the same SLA \\
Fusion (compiler) & Memory tiering (system) & Fused kernels reduce HBM traffic, potentially moving a kernel from the HBM slope to the L2 slope on the roofline \\
AOT compilation (compiler) & Dynamic shapes (system) & AOT engines compiled for fixed input shapes cannot handle variable-length sequences without padding, wasting compute \\
\bottomrule
\end{tabularx}
\end{table}

The takeaway is that single-layer benchmarks are structurally incomplete. A paper that evaluates quantization in isolation, holding the compiler and batch size fixed, may report a $3\times$ speedup that evaporates in production when the compiler version differs, the batch size is dynamic, and the hardware lacks native low-precision support. The taxonomy exists to make these dependencies visible before they become production incidents.

%---------------------------------------------------------------------------
% Section: 4
%---------------------------------------------------------------------------
\section{Metrics and Measurement Semantics}

Deployment optimization requires comparing candidate configurations across multiple performance dimensions. This section formalizes those dimensions, defines the metric vector that characterizes each deployment candidate, and identifies the semantic ambiguities that make cross-study comparison hazardous.

\subsection{The Deployment Metric Vector}

We associate each deployment candidate $c = (m, r, h, p, b, s)$ with a metric vector
\begin{equation}
\mathbf{v}(c) = \bigl(A(c),\; T(c),\; Q(c),\; M(c),\; P(c),\; E(c),\; C(c)\bigr),
\label{eq:metric_vector}
\end{equation}
where $m$ is the model variant, $r$ the runtime/compiler configuration, $h$ the hardware target, $p$ the precision regime, $b$ the batch size, and $s$ the serving policy. The seven components are defined below.

\paragraph{Accuracy $A(c)$.} Task-specific fidelity of the deployed model relative to a reference evaluation set. For classification this is typically top-1 or top-5 accuracy; for generative models it may be perplexity, BLEU, or a task-specific metric. Accuracy is the only metric in the vector that deployment optimization seeks to maximize rather than minimize.

\paragraph{Latency $T(c)$.} Time elapsed between submission of a single inference request and receipt of the complete response. In practice, at least four distinct quantities are commonly reported under this name:
\begin{enumerate}[leftmargin=1.5em]
  \item \emph{Kernel time}: GPU execution time of the model forward pass alone, excluding host overhead, memory allocation, and data transfer.
  \item \emph{Model latency}: End-to-end time from input tensor availability on the device to output tensor completion, including kernel launches and synchronization but excluding network and queuing delays.
  \item \emph{Request latency}: Wall-clock time from HTTP request arrival at the serving endpoint to response dispatch, including preprocessing, batching wait time, model execution, and postprocessing.
  \item \emph{Service latency}: Request latency inclusive of network round-trip time as observed by the client.
\end{enumerate}
These four quantities can differ by an order of magnitude for the same model on the same hardware. Any comparison that does not specify which variant is used is ambiguous.

\paragraph{Throughput $Q(c)$.} Number of completed inference requests per unit time. Throughput depends critically on batching policy. Under static batching with batch size $b$, throughput is approximately $b / T_{\text{batch}}(c)$ where $T_{\text{batch}}$ is the batch execution time. Under continuous batching, throughput can be substantially higher because the GPU is never idle waiting for a batch to fill. Offline throughput (maximum achievable under fully saturated input) and online throughput (achieved under a realistic arrival process while respecting latency SLAs) are fundamentally different quantities.

\paragraph{Memory Footprint $M(c)$.} Peak device memory consumed during inference execution. This includes model weights, activation tensors, KV cache (for autoregressive models), workspace buffers allocated by the runtime, and any JIT compilation artifacts. $M(c)$ determines whether a model fits on a single accelerator or requires multi-device partitioning, making it a hard feasibility constraint rather than a soft optimization target~\cite{xu2026kv}.

\paragraph{Power $P(c)$.} Mean electrical power draw during sustained inference. The scope of the measurement must be stated explicitly: accelerator-only power (measured via on-chip sensors or nvidia-smi), full board power (including voltage regulators and local memory), or total platform power (including host CPU, system DRAM, cooling fans, and power supply losses). The ratio between accelerator-only and platform power is typically $1.5\text{--}2.5\times$ depending on the system design.

\paragraph{Energy per Inference $E(c)$.} A derived metric computed as
\begin{equation}
E(c) = P(c) \cdot T(c).
\label{eq:energy}
\end{equation}
Energy per inference is useful for comparing deployment configurations that trade latency for power or vice versa. A configuration with $2\times$ lower latency but $3\times$ higher power draw consumes $1.5\times$ more energy per inference, which may be unacceptable in battery-powered or thermally constrained environments. Following the evidence taxonomy in Section~2, $E(c)$ is classified as derived evidence: it is only as reliable as the weakest of its two input measurements.

\paragraph{Compilation Cost $C(c)$.} Wall-clock time and computational resources required to transform a trained model into an optimized inference binary for the target hardware. TensorRT engine builds, TVM autotuning campaigns, and XLA compilation passes all contribute to $C(c)$. This cost is amortized over the lifetime of the deployed model but can be substantial: TVM autotuning for a single model on a single hardware target can require hours of GPU time. $C(c)$ is rarely reported in benchmark papers but directly affects deployment velocity.

Table~\ref{tab:metrics} summarizes the metric vector with units, optimization direction, and common measurement pitfalls.

\begin{table}[t]
\centering
\caption{Deployment metric vector summary.}
\label{tab:metrics}
\footnotesize
\begin{tabularx}{0.76\linewidth}{>{\raggedright\arraybackslash}p{1.6cm}lccX}
\toprule
Metric & Symbol & Unit & Dir. & Common pitfall \\
\midrule
Accuracy & $A$ & \% & $\uparrow$ & Evaluated on different splits or with different preprocessing \\
Latency & $T$ & ms & $\downarrow$ & Kernel time reported as request latency \\
Throughput & $Q$ & req/s & $\uparrow$ & Offline saturation vs.\ online SLA-constrained \\
Memory & $M$ & GB & $\downarrow$ & Peak vs.\ steady-state; KV cache excluded \\
Power & $P$ & W & $\downarrow$ & Accelerator-only vs.\ full platform \\
Energy & $E$ & J & $\downarrow$ & Derived from mean $P$ and mean $T$; variance ignored \\
Compilation & $C$ & min & $\downarrow$ & Rarely reported; makes reproducibility difficult \\
\bottomrule
\end{tabularx}
\end{table}

\subsection{Latency Decomposition}

To make the latency ambiguity precise, we decompose request latency into its constituent stages:
\begin{equation}
T_{\text{request}}(c) = T_{\text{queue}}(c) + T_{\text{pre}}(c) + T_{\text{wait}}(c) + T_{\text{model}}(c) + T_{\text{post}}(c),
\label{eq:latency_decomp}
\end{equation}
where $T_{\text{queue}}$ is the time spent in the serving system's request queue, $T_{\text{wait}}$ is the time spent waiting for a batch to fill (zero under continuous batching), $T_{\text{model}}$ is the model forward pass, and $T_{\text{pre}}$, $T_{\text{post}}$ cover tokenization, image resizing, and output decoding.

The model execution time itself decomposes further. For an autoregressive language model generating $n$ tokens:
\begin{equation}
T_{\text{model}}(c) = T_{\text{prefill}}(c) + \sum_{i=1}^{n-1} T_{\text{decode},i}(c),
\label{eq:autoregressive_latency}
\end{equation}
where $T_{\text{prefill}}$ is the time to process the full input prompt (compute bound, parallelizable across tokens) and each $T_{\text{decode},i}$ is a single token generation step (memory bound, sequential) \cite{kwon2023vllm}. This decomposition explains why autoregressive models exhibit fundamentally different latency scaling than encoder models: prefill latency scales sublinearly with input length (bounded by compute), while decode latency scales linearly with output length (bounded by memory bandwidth for the KV cache read).

Published benchmarks that report only $T_{\text{model}}$ understate end-to-end latency. Benchmarks that report $T_{\text{request}}$ under artificially low load understate the queuing component that dominates at high utilization.

\subsection{Throughput and Its Relationship to Latency}

Offline throughput saturates the system with a backlog and divides completions by elapsed time. Online throughput feeds requests at a specified arrival rate $\lambda$ and counts completions that meet an SLA. The gap between them is governed by queuing dynamics.

The classical M/M/1 queuing model \cite{kleinrock1975queuing} provides a useful first-order approximation. For a single-server system with mean service time $\mu^{-1} = T_{\text{model}}(c)$ and Poisson arrivals at rate $\lambda$, the utilization is
\begin{equation}
\rho = \lambda / \mu = \lambda \cdot T_{\text{model}}(c).
\label{eq:utilization}
\end{equation}
The system is stable only when $\rho < 1$. Under the M/M/1 approximation, mean response time is
\begin{equation}
R(c, \lambda) = \frac{T_{\text{model}}(c)}{1 - \rho} = \frac{T_{\text{model}}(c)}{1 - \lambda \cdot T_{\text{model}}(c)}.
\label{eq:mm1_response}
\end{equation}
This reveals a critical nonlinearity, illustrated in Figure~\ref{fig:queuing}. At $\rho = 0.5$ (50\% utilization), the response time is $2T_{\text{model}}$. At $\rho = 0.9$, it is $10T_{\text{model}}$. At $\rho = 0.95$, it is $20T_{\text{model}}$. A 10\% reduction in $T_{\text{model}}$ near saturation yields a far larger reduction in $R$ than the same reduction at low load.

\input{figures/fig_queuing}

The maximum online throughput under a response time SLA $R_{\max}$ can be derived by inverting Eq.~\eqref{eq:mm1_response}:
\begin{equation}
\lambda_{\max}(c) = \frac{1}{T_{\text{model}}(c)} \left(1 - \frac{T_{\text{model}}(c)}{R_{\max}}\right).
\label{eq:max_throughput}
\end{equation}
A system that achieves 10{,}000 req/s offline may sustain only 3{,}000 req/s online under a 50\,ms p99 latency SLA, because Eq.~\eqref{eq:max_throughput} constrains the arrival rate to prevent tail latency violations. MLPerf Inference \cite{reddi2020mlperf} captures this distinction through its four scenarios: single-stream, multi-stream, server, and offline. Results across scenarios are not comparable.

\subsection{Operational Intensity and the Compute to Memory Ratio}

The roofline model (Section~1, Figure~\ref{fig:roofline}) relies on operational intensity, which we formalize here. For a kernel $k$ with $F_k$ floating point operations and $B_k$ bytes of data movement through the bottleneck memory level:
\begin{equation}
I_k = \frac{F_k}{B_k} \quad [\text{FLOP/Byte}].
\label{eq:operational_intensity}
\end{equation}
The attainable performance is bounded by
\begin{equation}
P_k \le \min\left(\pi_h,\; \beta_h \cdot I_k\right),
\label{eq:roofline_bound}
\end{equation}
where $\pi_h$ is the peak compute throughput of hardware $h$ (in FLOP/s) and $\beta_h$ is the peak memory bandwidth (in Byte/s). The ridge point $I^* = \pi_h / \beta_h$ separates the memory bound regime ($I_k < I^*$, performance limited by $\beta_h \cdot I_k$) from the compute bound regime ($I_k > I^*$, performance limited by $\pi_h$).

Quantization changes $I_k$ by reducing $B_k$. If a kernel reads $N$ weight elements at $w$ bytes each and $N$ activation elements at $a$ bytes each, then
\begin{equation}
I_k(p) = \frac{F_k}{N(w_p + a_p)},
\label{eq:oi_precision}
\end{equation}
where $w_p$ and $a_p$ are the per-element byte widths under precision regime $p$. Moving from FP32 ($w=4, a=4$) to INT8 ($w=1, a=1$) increases $I_k$ by $4\times$, shifting the kernel rightward on the roofline as illustrated in Figure~\ref{fig:quant_effect}.

\subsection{Power and Energy Scope}

We formalize the energy per inference as the product of power and latency:
\begin{equation}
E(c) = P(c) \cdot T(c).
\label{eq:energy_formal}
\end{equation}
The power term decomposes by scope:
\begin{equation}
P_{\text{platform}}(c) = P_{\text{accel}}(c) + P_{\text{host}}(c) + P_{\text{mem}}(c) + P_{\text{cooling}}(c) + P_{\text{PSU\_loss}}(c).
\label{eq:power_decomp}
\end{equation}
The ratio $\alpha = P_{\text{platform}} / P_{\text{accel}}$ is system dependent but typically falls in the range $1.5 \le \alpha \le 2.5$ for single-GPU servers. A paper reporting $E_{\text{accel}} = 1.5$\,J and another reporting $E_{\text{platform}} = 4.0$\,J for the same configuration are both correct but not comparable. The $\alpha$ ratio is determined by the platform design, not the model or runtime.

For multi-GPU configurations, idle power must be allocated:
\begin{equation}
E_{\text{multi}}(c) = \frac{1}{Q(c)} \left(\sum_{g=1}^{G} P_g(c) + P_{\text{interconnect}}(c)\right),
\label{eq:multi_gpu_energy}
\end{equation}
where $G$ is the number of GPUs, $P_g$ is the power draw of GPU $g$ (which may differ between active and idle GPUs in pipeline-parallel configurations), and $P_{\text{interconnect}}$ covers NVSwitch, NVLink, or InfiniBand power. Dividing by throughput $Q(c)$ converts total system power to per-inference energy.

\subsection{Memory Footprint Decomposition}

For transformer based models, peak memory during inference decomposes as:
\begin{equation}
M(c) = M_{\text{weights}}(c) + M_{\text{KV}}(c) + M_{\text{activations}}(c) + M_{\text{workspace}}(c),
\label{eq:memory_decomp}
\end{equation}

where $M_{\text{weights}} = N_{\text{params}} \cdot w_p$ scales with precision, $M_{\text{KV}} = 2 \cdot n_{\text{layers}} \cdot n_{\text{kv}} \cdot d_{\text{head}} \cdot s_{\text{seq}} \cdot b \cdot w_p$ scales linearly with sequence length and batch size (the factor of~2 accounts for both keys and values, $n_{\text{kv}}$ is the number of key-value heads, and $d_{\text{head}}$ is the per-head dimension), $M_{\text{activations}}$ depends on the largest intermediate tensor, and $M_{\text{workspace}}$ covers runtime buffers. For standard multi-head attention $n_{\text{kv}} = n_{\text{heads}}$ and $n_{\text{kv}} \cdot d_{\text{head}} = d_{\text{model}}$; under grouped-query attention (GQA) $n_{\text{kv}} < n_{\text{heads}}$, reducing KV cache by up to $n_{\text{heads}} / n_{\text{kv}}\times$.

For a 7B parameter model in FP16 ($w_p = 2$ bytes), the weight component alone is $7 \times 10^9 \times 2 = 14$\,GB, leaving limited headroom on a 24\,GB consumer GPU for KV cache and activations. Quantizing to INT4 ($w_p = 0.5$ bytes) reduces the weight component to 3.5\,GB, making the model deployable on edge accelerators. This is the primary practical motivation for aggressive weight quantization in LLM serving.

%---------------------------------------------------------------------------
% Section: 5
%---------------------------------------------------------------------------
\section{Analytical Framework}

Sections~3 and~4 cataloged the optimization levers and metrics relevant to deployment. This section develops the analytical machinery that connects those levers and metrics to quantitative predictions. The framework rests on three pillars: roofline analysis for hardware grounding, queuing theory for system level amplification, and multi-objective optimization for deployment selection.

\subsection{Roofline Grounded Inference Time Bounds}

The roofline model, introduced by Williams et al.\ \cite{williams2009roofline} and instantiated for AI inference in Figure~\ref{fig:roofline}, provides a lower bound on kernel execution time. For a deployment candidate $c$ executing kernel $k$ on hardware $h$, the minimum execution time is
\begin{equation}
T_k(c) \ge \max\!\left\{\frac{F_k}{\eta_{\text{comp}}\,\pi_h},\;\frac{B_k}{\eta_{\text{bw}}\,\beta_h}\right\},
\label{eq:roofline_time}
\end{equation}
where $F_k$ and $B_k$ are the FLOPs and bytes moved by kernel $k$, $\pi_h$ and $\beta_h$ are the peak compute and bandwidth of hardware $h$, and $\eta_{\text{comp}}, \eta_{\text{bw}} \in (0,1]$ are realized efficiency factors. The total model inference time sums over all kernels in the execution graph:
\begin{equation}
T_{\text{model}}(c) \ge \sum_{k \in \mathcal{K}(c)} T_k(c).
\label{eq:total_time}
\end{equation}
This bound is tight when kernel execution is serialized (no overlap between compute and memory). In practice, pipelining and asynchronous execution reduce the gap, but the bound remains a useful diagnostic: if a candidate's measured latency significantly exceeds the roofline bound, the gap represents optimization headroom that compiler or runtime improvements could capture.

The significance of Eq.~\eqref{eq:roofline_time} for deployment selection is that it identifies which resource limits performance. If $T_k$ is determined by the $F_k / \pi_h$ term, the kernel is compute bound and reducing its memory traffic (via fusion or layout optimization) will not help. If $T_k$ is determined by the $B_k / \beta_h$ term, the kernel is memory bound and increasing arithmetic efficiency (via more aggressive quantization or larger Tensor Core tiles) will not help. Misidentifying the bottleneck is the most common cause of ineffective optimization.

\subsection{Queuing Amplification in Online Serving}

The M/M/1 response time relationship derived in Section~4 (Eq.~\ref{eq:mm1_response}) is a simplification. Real inference serving systems violate the M/M/1 assumptions in at least three ways: service times are not exponentially distributed (they are deterministic for fixed-length inputs, heavy-tailed for variable-length autoregressive generation), the server is not a single processor (GPU parallelism handles batched requests concurrently), and arrival processes in production exhibit burstiness that Poisson models underestimate.

Despite these limitations, the qualitative insight of the M/M/1 model is robust across more realistic queuing models. The Pollaczek-Khinchine formula \cite{kleinrock1975queuing} for M/G/1 queues gives the mean response time as
\begin{equation}
R = T_{\text{model}} + \frac{\rho}{1-\rho} \cdot \frac{T_{\text{model}}(1 + C_s^2)}{2},
\label{eq:mg1}
\end{equation}
where $C_s = \sigma_s / T_{\text{model}}$ is the coefficient of variation of the service time distribution. For deterministic service ($C_s = 0$, the M/D/1 case), the queuing delay is exactly half that of the M/M/1 model. For heavy-tailed service distributions ($C_s > 1$, typical of autoregressive generation with variable output lengths), the queuing delay is worse than M/M/1. In all cases, the $\rho / (1 - \rho)$ factor produces the same hockey-stick amplification near saturation (Figure~\ref{fig:queuing}).

The practical implication is that any model or compiler optimization that reduces $T_{\text{model}}$ by $\Delta T$ produces a response time reduction of
\begin{equation}
\Delta R \approx \frac{\Delta T}{(1-\rho)^2}
\label{eq:sensitivity}
\end{equation}
for small $\Delta T$ near utilization $\rho$. At $\rho = 0.9$, a 1\,ms reduction in service time yields a 100\,ms reduction in mean response time. This amplification factor is the quantitative justification for investing in seemingly small per-request optimizations when the serving system operates near capacity.

\subsection{Multi-Objective Deployment Ranking}

Deployment selection requires comparing candidates across the metric vector $\mathbf{v}(c)$ defined in Section~4. We formalize this comparison using Pareto dominance and a weighted ranking functional.

\paragraph{Feasibility.} The feasible set eliminates candidates that violate hard constraints:
\begin{equation}
\mathcal{F} = \{c : A(c) \ge A_{\min},\; T(c) \le T_{\max},\; M(c) \le M_{\max},\; E(c) \le E_{\max}\}.
\label{eq:feasible}
\end{equation}
Any candidate not in $\mathcal{F}$ is discarded before ranking. This ordering matters: ranking an infeasible configuration above a feasible one may be mathematically valid but operationally useless.

\paragraph{Pareto dominance.} Among feasible candidates, $c_1$ dominates $c_2$ (written $c_1 \succ c_2$) if
\begin{equation}
A(c_1) \ge A(c_2),\; T(c_1) \le T(c_2),\; M(c_1) \le M(c_2),\; E(c_1) \le E(c_2),
\label{eq:dominance}
\end{equation}
with at least one strict inequality. The Pareto front $\mathcal{P} \subseteq \mathcal{F}$ consists of all non-dominated candidates. Figure~\ref{fig:pareto} illustrates this structure projected onto the accuracy-throughput plane.

\paragraph{Deployment ranking functional.} To select a single candidate from $\mathcal{P}$, practitioners apply application-specific weights. We define the Performance Efficiency Index (PEI) as
\begin{equation}
\mathrm{PEI}(c) = \frac{A(c)^{w_A} \cdot Q(c)^{w_Q}}{P(c)^{w_P} \cdot M(c)^{w_M}}, \qquad w_A, w_Q, w_P, w_M > 0.
\label{eq:pei}
\end{equation}

\begin{theorem}[Monotonicity under dominance]
If $c_1 \succ c_2$, then $\mathrm{PEI}(c_1) > \mathrm{PEI}(c_2)$.
\end{theorem}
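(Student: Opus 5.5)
The plan is to pass to logarithms and use monotonicity of each factor. All metrics are strictly positive physical quantities, so $\mathrm{PEI}(c)>0$ and
\[
\log \mathrm{PEI}(c) = w_A \log A(c) + w_Q \log Q(c) - w_P \log P(c) - w_M \log M(c).
\]
This is a sum of functions, each strictly increasing in $A$ and $Q$ and strictly decreasing in $P$ and $M$, because $w_A,w_Q,w_P,w_M>0$. Hence if $c_1$ is at least as good as $c_2$ in each of $A,Q,P,M$, every summand of $\log\mathrm{PEI}(c_1)-\log\mathrm{PEI}(c_2)$ is $\ge 0$. If in addition one comparison is strict, the corresponding summand is $>0$. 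Strict monotonicity of $\exp$ then gives $\mathrm{PEI}(c_1)>\mathrm{PEI}(c_2)$.

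The real work, and the step I expect to be the main obstacle, is that the dominance relation in Eq.~\eqref{eq:dominance} is stated over $(A,T,M,E)$, whereas the PEI of Eq.~\eqref{eq:pei} is built from $(A,Q,P,M)$. The components $A$ and $M$ transfer directly. For $Q$, I would use the static-batching relation from Section~4, $Q(c)=b/T(c)$, under the assumption that the two candidates are compared at a common batch semantics $b$. Then $T(c_1)\le T(c_2)$ gives $Q(c_1)\ge Q(c_2)$.

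For $P$, Eq.~\eqref{eq:energy} gives $P=E/T$. Here dominance does \emph{not} give $P(c_1)\le P(c_2)$: a faster candidate with the same energy draws more power. So I would not try to bound $Q$ and $P$ separately. Instead I would substitute both into the PEI:
\[
\frac{Q^{w_Q}}{P^{w_P}} = b^{w_Q}\, T^{\,w_P-w_Q}\, E^{-w_P}.
\]
This is nonincreasing in $T$ exactly when $w_Q\ge w_P$, and strictly decreasing in $E$. So under the side condition $w_Q\ge w_P$, the log-sum argument goes through with summands in $(A,T,E,M)$, each monotone in the direction favoured by dominance.

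Strictness needs one more check. If the strict inequality in the dominance relation lies in $A$, $E$ or $M$, the corresponding summand is strictly positive. If it lies only in $T$, strictness additionally requires $w_Q>w_P$. Without this condition the theorem is false, and I would give a two-candidate counterexample with equal $A,M$, equal $E$, smaller $T$, and $w_P>w_Q$.

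I would therefore present the proof in two parts. The first is the clean statement, in which dominance is read componentwise on the PEI arguments $(A,Q,P,M)$; it is immediate from the log-monotonicity argument above. The second is the version faithful to Eq.~\eqref{eq:dominance}, valid under the explicit hypotheses of common batch size $b$ with $Q=b/T$, and $w_Q>w_P$ (or $w_Q\ge w_P$ when the strict inequality is not solely in $T$).
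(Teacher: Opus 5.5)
Your Part 1 is the paper's proof. The paper writes $\mathrm{PEI}(c_1)/\mathrm{PEI}(c_2)$ as the product $(A_1/A_2)^{w_A}(Q_1/Q_2)^{w_Q}(P_2/P_1)^{w_P}(M_2/M_1)^{w_M}$, says each factor is at least one with one strictly larger, and concludes; that is your log-sum argument exponentiated.

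Where you go beyond the paper is in how the factor inequalities are justified. The paper gets ``each factor is at least one'' simply by appeal to dominance. It tacitly reads $\succ$ componentwise on $(A,Q,P,M)$ and never addresses that Eq.~\eqref{eq:dominance} is stated on $(A,T,M,E)$, which constrains neither $Q$ nor $P$. Your diagnosis of this mismatch is right, and your Part 2 is a correct repair rather than a detour. The substitution $Q=b/T$, $P=E/T$, giving $Q^{w_Q}P^{-w_P}=b^{w_Q}T^{w_P-w_Q}E^{-w_P}$, is correct. So is the condition $w_Q\ge w_P$, sharpened to $w_Q>w_P$ when the only strict inequality is in $T$. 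Your counterexample with equal $A,M,E$, smaller $T$ and $w_P>w_Q$ also checks out.

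Without a link such as $Q=b/T$ the statement fails even more readily, and the paper's own data shows it. In Table~\ref{tab:jetson}, Swin-T/TVM matches Swin-T/PyTorch in accuracy with lower $T$ and $E$; the worked example accordingly marks TVM as Pareto and PyTorch as merely feasible. Yet with equal weights and equal memory footprints (memory is not reported), the PEI ratio TVM/PyTorch is $(874/1250)(18.8/16.0)\approx 0.82<1$. This happens because the throughput reported there is offline saturation throughput, not $b/T$.

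So what the paper actually proves is your clean version, with dominance read on the PEI arguments. Your two-part presentation is the more accurate account of what is true. The same caveat applies to the paper's corollary that the PEI maximizer over $\mathcal{F}$ is Pareto optimal.
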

\begin{proof}
The ratio $\mathrm{PEI}(c_1)/\mathrm{PEI}(c_2)$ equals the product of $(A_1/A_2)^{w_A}$, $(Q_1/Q_2)^{w_Q}$, $(P_2/P_1)^{w_P}$, and $(M_2/M_1)^{w_M}$. By dominance, each factor is at least one and at least one is strictly greater than one. Since all exponents are positive, the product exceeds one.
\end{proof}

\begin{theorem}[Scale invariance]
Multiplying any metric in Eq.~\eqref{eq:pei} by a positive constant (e.g., converting latency from ms to $\mu$s) preserves the induced ordering of candidates.
\end{theorem}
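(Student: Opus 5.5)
The plan is to show that rescaling any single metric multiplies $\mathrm{PEI}$ by one positive constant, the same for every candidate. A strictly increasing transformation of this kind cannot change any pairwise comparison. Because PEI is a product of power functions, the computation reduces to one line per metric.

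\textbf{Setting up the rescaling.} Fix a metric $X \in \{A, Q, P, M\}$ with weight $w_X > 0$, and a constant $\kappa > 0$. Define the rescaled metric $\tilde X(c) = \kappa X(c)$ and leave the other three metrics unchanged. Write $\widetilde{\mathrm{PEI}}$ for Eq.~\eqref{eq:pei} evaluated with $\tilde X$ in place of $X$. All metrics appearing in Eq.~\eqref{eq:pei} are assumed strictly positive, as is already implicit for the powers and quotient to be defined. Pulling the constant out of the power gives $(\kappa X)^{w_X} = \kappa^{w_X} X^{w_X}$. Hence
\[
\widetilde{\mathrm{PEI}}(c) = \kappa^{\pm w_X}\,\mathrm{PEI}(c) \quad \text{for every candidate } c,
\]
with exponent $+w_X$ when $X$ sits in the numerator ($A$ or $Q$) and $-w_X$ when it sits in the denominator ($P$ or $M$).

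\textbf{Concluding the ordering.} In either case the factor $\lambda = \kappa^{\pm w_X}$ is a positive constant that does not depend on $c$. The map $t \mapsto \lambda t$ is therefore strictly increasing on $(0,\infty)$. It follows that $\mathrm{PEI}(c_1) > \mathrm{PEI}(c_2)$ holds if and only if $\widetilde{\mathrm{PEI}}(c_1) > \widetilde{\mathrm{PEI}}(c_2)$, and the same equivalence holds for equality. So the induced ordering, including ties, is unchanged. Simultaneous rescaling of several metrics follows by applying the single-metric case once per metric, or directly: the combined factor is $\prod_X \kappa_X^{\pm w_X} > 0$. Ratios such as $\mathrm{PEI}(c_1)/\mathrm{PEI}(c_2)$ are invariant outright, which matches the ratio argument used for the monotonicity theorem.

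\textbf{Scope caveat.} The only delicate point is interpretive rather than technical. The example in the statement mentions latency, but Eq.~\eqref{eq:pei} contains no latency term. I would read the claim as covering the four metrics in Eq.~\eqref{eq:pei}. I would also remark that any quantity entering PEI through a power law, for instance throughput if it were expressed via latency, inherits the same invariance. The argument must also be restricted to multiplicative rescaling by $\kappa > 0$. Affine changes of unit with an offset would break it, as would a non-positive $\kappa$, which would leave the domain where the powers are defined.
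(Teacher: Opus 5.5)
Your proof is correct and follows the same route as the paper's: a rescaling multiplies PEI by a positive constant that does not depend on the candidate, and multiplying by such a constant preserves the ordering. The only difference is that you make the constant explicit as $\kappa^{\pm w_X}$. Your scope remark is also fair and goes beyond the paper, which does not address it: latency does not appear in Eq.~\eqref{eq:pei}, so the statement's example is only illustrative, and the argument covers multiplicative changes of unit but not affine ones.
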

\begin{proof}
A unit conversion multiplies PEI by the same positive constant for all candidates. Multiplication by a positive constant preserves ordering.
\end{proof}

Note that the converse of Theorem~1 does not hold: $\mathrm{PEI}(c_1) > \mathrm{PEI}(c_2)$ does not imply $c_1 \succ c_2$. Two candidates can be mutually non-dominating (e.g., one faster but less accurate) while PEI ranks them differently based on the weight vector. This is by design. The weight vector encodes the application's relative priorities, and the PEI ranking resolves trade offs that Pareto dominance alone cannot.

The logarithmic form $U(c) = w_A \log A + w_Q \log Q - w_P \log P - w_M \log M$ makes the sensitivity transparent: percentage improvements in heavily weighted metrics dominate the ranking.

\subsection{Constraint Aware Selection Procedure}

Algorithm~\ref{alg:selection} formalizes the full selection workflow.

\begin{algorithm}[t]
\caption{Constraint aware deployment selection}
\label{alg:selection}
\footnotesize
\begin{algorithmic}[1]
\Require Candidate set $\mathcal{C}$; constraints $A_{\min}, T_{\max}, M_{\max}, E_{\max}$; weights $w_A, w_Q, w_P, w_M$
\Ensure Selected candidate $c^*$
\State $\mathcal{F} \gets \emptyset$
\ForAll{$c \in \mathcal{C}$}
  \If{$A(c) < A_{\min}$ \textbf{or} $M(c) > M_{\max}$}
    \State \textbf{continue} \Comment{Eliminate on cheap constraints first}
  \EndIf
  \State Measure or estimate $T(c)$ and $E(c)$
  \If{$T(c) \le T_{\max}$ \textbf{and} $E(c) \le E_{\max}$}
    \State $\mathcal{F} \gets \mathcal{F} \cup \{c\}$
  \EndIf
\EndFor
\State $\mathcal{P} \gets$ non-dominated points of $\mathcal{F}$ under $(T, E, M, -A)$
\State $c^* \gets \arg\max_{c \in \mathcal{P}} \mathrm{PEI}(c)$
\State \Return $c^*$
\end{algorithmic}
\end{algorithm}

\begin{table}[t]
\centering
\caption{Summary of analytical models used in the deployment framework.}
\label{tab:analytical_models}
\footnotesize
\begin{tabularx}{0.7\linewidth}{>{\raggedright\arraybackslash}p{1.6cm}>{\raggedright\arraybackslash}p{2.2cm}>{\raggedright\arraybackslash}p{1.8cm}X}
\toprule
Model & Inputs & Output & Key assumption \\
\midrule
Roofline (Eq.~\ref{eq:roofline_time}) & FLOPs $F_k$, bytes $B_k$, hardware specs $\pi_h, \beta_h$ & Lower bound on kernel time $T_k$ & No compute-memory overlap; single bottleneck memory level \\
Queuing (Eq.~\ref{eq:mg1}) & Service time $T$, arrival rate $\lambda$, variance $C_s^2$ & Mean response time $R$ & Poisson arrivals; single logical server; stationary workload \\
PEI (Eq.~\ref{eq:pei}) & Metric vector $\mathbf{v}(c)$, weights $w_A, w_Q, w_P, w_M$ & Scalar ranking score & Metrics are independently measurable; weights reflect application priorities \\
\bottomrule
\end{tabularx}
\end{table}

The algorithm enforces a deliberate ordering. Accuracy and memory are checked first because they can be evaluated cheaply (accuracy from a validation run, memory from model size and precision). Latency and energy require measurement or estimation on the target hardware, which is expensive. Filtering on cheap constraints before measuring expensive ones reduces the total evaluation cost by eliminating clearly infeasible candidates early. The Pareto filtering in line 10 removes dominated candidates, and the PEI ranking in line 11 selects the final deployment configuration.

A corollary of the monotonicity theorem is that the PEI maximizer on a finite feasible set is always Pareto optimal. The algorithm therefore produces a consistent result: it cannot select a dominated candidate.

Table~\ref{tab:analytical_models} summarizes the three analytical models, their inputs, outputs, and the assumptions under which they are valid.

%---------------------------------------------------------------------------
% Section: 6
%---------------------------------------------------------------------------
\section{Empirical Synthesis}

The preceding sections developed the taxonomy, metrics, and analytical models for deployment optimization. This section applies them to representative empirical evidence, demonstrating the cross-layer phenomena that the framework predicts. We anchor the analysis on a single well-documented benchmark study \cite{ratul2025electronics} that evaluates five inference frameworks on one hardware platform under controlled conditions, supplemented by the MLPerf Inference methodology \cite{reddi2020mlperf} as a comparability reference.

\subsection{Anchor Study and Evidence Classification}

Ratul et al.\ \cite{ratul2025electronics} evaluate PyTorch, ONNX Runtime (ORT), TensorRT (TRT), Apache TVM, and JAX on an NVIDIA Jetson AGX Orin, measuring latency, throughput, accuracy, and power for ResNet-152, MobileNetV2, and Swin-T. All measurements share the same hardware, software environment, and measurement methodology, satisfying the within-paper comparability criterion from Section~2. Latency is reported as mean single-sample inference time (model latency in the taxonomy of Section~4), throughput as offline saturation throughput, and power as platform-level draw measured via the Jetson's integrated power monitor.

The numbers in Table~\ref{tab:jetson} are measured evidence. The energy column is derived evidence computed as $E = P \cdot T$ per Eq.~\eqref{eq:energy}.

\begin{table}[t]
\centering
\caption{Inference framework comparison on NVIDIA Jetson AGX Orin. Latency is mean single-sample model latency. Throughput is offline saturation. Energy per inference is derived ($E = P \cdot T$). All values from Ratul et al.\ \cite{ratul2025electronics}.$^\dagger$}
\label{tab:jetson}
\footnotesize
\begin{tabularx}{0.505\linewidth}{>{\raggedright\arraybackslash}p{1.2cm}>{\raggedright\arraybackslash}p{1.1cm}rrrr}
\toprule
Model & Runtime & Lat.\ (ms) & Acc.\ (\%) & Tput & Power (W) \\
\midrule
R-152 & PyTorch & 9.24 & 75.3 & 932 & 21.8 \\
R-152 & ORT & 285.5 & 72.0 & 7.5 & 14.2 \\
R-152 & TRT & 2.28 & 76.6 & 652 & 28.3 \\
R-152 & TVM & 7.43 & 74.3 & 456 & 21.9 \\
R-152 & JAX & 29.1 & 72.0 & 50.3 & 15.1 \\
\midrule
MobNet & PyTorch & 4.20 & 69.7 & 1059 & 14.4 \\
MobNet & TRT & 1.14 & 70.6 & 1382 & 13.1 \\
MobNet & TVM & 9.53 & 71.2 & 171 & 13.8 \\
\midrule
Swin-T & PyTorch & 7.27 & 77.8 & 1250 & 18.8 \\
Swin-T & TRT & 3.95 & 75.7 & 318 & 27.4 \\
Swin-T & TVM & 5.28 & 77.8 & 874 & 16.0 \\
\bottomrule
\end{tabularx}
\vspace{2pt}
\par\noindent{\scriptsize $^\dagger$Offline saturation throughput depends on the maximum batch size that fits in device memory at the framework's peak allocation, not solely on per-sample latency. TensorRT's higher memory overhead per inference context can limit its maximum batch size relative to PyTorch, producing lower saturation throughput despite lower per-sample latency.}
\end{table}

\subsection{Cross-Framework Latency Spread}

The most striking observation in Table~\ref{tab:jetson} is the magnitude of the latency spread across frameworks for the same model on the same hardware. ResNet-152 exhibits a $125\times$ spread between TensorRT (2.28\,ms) and ONNX Runtime (285.5\,ms). This is not a difference between a good and a bad framework. It is a difference between a framework that applies aggressive graph fusion and hardware-specific kernel generation (TRT) and one that executes the ONNX graph with minimal optimization (ORT on this particular hardware and software version).

The spread is model dependent. MobileNetV2, a lightweight architecture designed for efficient execution, shows a much smaller $8.4\times$ spread between TRT (1.14\,ms) and TVM (9.53\,ms). Swin-T falls in between at $1.9\times$ between PyTorch (7.27\,ms) and TRT (3.95\,ms). This variation across model families is exactly the cross-layer interaction predicted in Section~3: the benefit of compiler optimization depends on how much optimization headroom the model architecture leaves. A model already designed for hardware efficiency (MobileNet) benefits less from aggressive compilation than a model designed for accuracy without hardware awareness (ResNet-152). Figure~\ref{fig:framework_latency} visualizes this spread on a log scale, making the two-order-of-magnitude gap for ResNet-152 immediately apparent.

\input{figures/fig_framework_latency}

\subsection{Accuracy Variation Across Runtimes}

A subtlety that deployment papers rarely discuss is that accuracy varies across runtimes for the same model. ResNet-152 achieves 76.6\% top-1 under TensorRT but only 72.0\% under ONNX Runtime and JAX. The 4.6 percentage point gap is not explained by the model architecture. It is a consequence of differences in precision handling, graph transformation, and numerical accumulation order across the compiler stacks.

This means that latency comparisons without accuracy normalization are incomplete. TensorRT is both the fastest and the most accurate runtime for ResNet-152 on this platform, which is unusual. More commonly, lower-precision or more aggressively optimized runtimes sacrifice accuracy. When they do, the latency gain must be evaluated against the accuracy loss on the Pareto frontier (Figure~\ref{fig:pareto}), not in isolation.

\subsection{Derived Energy per Inference}

Using the power and latency columns from Table~\ref{tab:jetson}, we compute derived energy per inference $E = P \cdot T$. Table~\ref{tab:energy} presents the results.

\begin{table}[t]
\centering
\caption{Derived energy per inference (Joules) on Jetson AGX Orin. Computed as $E = P \cdot T$ from Table~\ref{tab:jetson}. Classified as derived evidence per Section~2.}
\label{tab:energy}
\footnotesize
\begin{tabularx}{0.375\linewidth}{>{\raggedright\arraybackslash}p{1.2cm}>{\raggedright\arraybackslash}p{1.1cm}rrr}
\toprule
Model & Runtime & $P$ (W) & $T$ (ms) & $E$ (J) \\
\midrule
R-152 & PyTorch & 21.8 & 9.24 & 0.201 \\
R-152 & ORT & 14.2 & 285.5 & 4.054 \\
R-152 & TRT & 28.3 & 2.28 & 0.065 \\
R-152 & TVM & 21.9 & 7.43 & 0.163 \\
R-152 & JAX & 15.1 & 29.1 & 0.439 \\
\midrule
MobNet & TRT & 13.1 & 1.14 & 0.015 \\
Swin-T & TRT & 27.4 & 3.95 & 0.108 \\
Swin-T & TVM & 16.0 & 5.28 & 0.085 \\
\bottomrule
\end{tabularx}
\end{table}

The energy results reveal a pattern invisible in the latency numbers alone. TensorRT draws the highest power (28.3\,W for ResNet-152) but achieves the lowest energy per inference (0.065\,J) because its latency advantage more than compensates for the power increase. ONNX Runtime draws the lowest power (14.2\,W) but consumes the highest energy (4.054\,J) because its extreme latency (285.5\,ms) dominates the product.

An interesting reversal occurs for Swin-T. TVM achieves lower energy (0.085\,J) than TensorRT (0.108\,J) despite higher latency (5.28 vs 3.95\,ms), because TVM draws substantially less power (16.0 vs 27.4\,W). For energy-constrained deployments (battery powered devices, thermally throttled edge platforms), TVM would be the preferred runtime for Swin-T, contradicting the latency-only ranking. This is precisely the scenario where the multi-objective framework of Section~5 applies: the optimal choice depends on whether the constraint binds on latency or on energy. Figure~\ref{fig:energy} visualizes the reversal.

\input{figures/fig_energy}

\subsection{Implications for the Analytical Framework}

The empirical evidence from the Jetson AGX Orin study validates several predictions of the analytical framework.

The roofline model (Section~5.1) predicts that compiler optimizations primarily affect the $\eta_{\text{comp}}$ and $\eta_{\text{bw}}$ efficiency factors rather than the hardware limits $\pi_h$ and $\beta_h$. The $125\times$ latency spread across frameworks confirms this: all five runtimes execute on the same Orin hardware with the same peak FLOP/s and memory bandwidth, yet their realized efficiencies differ by two orders of magnitude.

The queuing amplification model (Section~5.2) predicts that the latency differences would produce even larger throughput differences under online serving conditions. At $\rho = 0.8$, the M/M/1 response time amplification factor is $5\times$. TensorRT's 2.28\,ms service time allows $\rho = 0.8$ at $\lambda = 351$ req/s, with mean response time 11.4\,ms. ONNX Runtime's 285.5\,ms service time allows $\rho = 0.8$ at only $\lambda = 2.8$ req/s, with mean response time 1.43\,s. The online throughput gap is not $125\times$ (the offline ratio) but effectively infinite for any reasonable latency SLA.

The Pareto analysis (Section~5.3) predicts that energy-optimal and latency-optimal configurations can differ. The Swin-T TVM vs TRT comparison confirms this: TVM is Pareto optimal for energy-constrained deployments while TRT is Pareto optimal for latency-constrained ones.

\subsection{Limitations of the Anchor Evidence}

The Ratul et al.\ study, while valuable for its controlled single-platform design, has limitations that the comparability protocol of Section~2 requires us to acknowledge.

First, the Jetson AGX Orin is an edge platform. The relative performance of frameworks may differ substantially on data center GPUs (A100, H100) where TensorRT's optimization passes have access to different hardware features (larger shared memory, NVLink, multi-instance GPU). Second, the study evaluates vision models (ResNet, MobileNet, Swin-T) rather than large language models, where the prefill/decode decomposition (Eq.~\ref{eq:autoregressive_latency}) and KV cache memory scaling (Eq.~\ref{eq:memory_decomp}) introduce qualitatively different bottlenecks \cite{kwon2023vllm}. Third, the study reports mean latency rather than tail latency percentiles, which underestimates the queuing impact at high utilization.

\input{figures/fig_datacenter_frameworks}

These limitations do not invalidate the edge evidence but restrict its scope.

\subsection{Data Center LLM Serving Evidence}

To complement the edge platform analysis, we draw on data center LLM serving benchmarks. Per the comparability policy of Section~2, cross-platform numerical comparisons are used directionally rather than literally where studies differ in model families, software versions, or measurement methodology. Where possible, we anchor claims to sources that report explicit hardware configuration, software version, and measurement protocol.

\paragraph{Hardware generation scaling.}
NVIDIA's published TensorRT-LLM benchmarks \cite{nvidia2023trtllm} report that H100 FP8 achieves up to $4.6\times$ higher maximum throughput and $4.4\times$ faster first-token latency than A100 FP16 on Llama-2-13B (TensorRT-LLM v0.5.0, TensorRT 9.1, SXM 80\,GB, TP=1, BS swept 1--64). H100 FP8 sustains over 10{,}000 output tokens per second at 100\,ms time to first token for 64 concurrent requests, while A100 FP16 saturates at approximately 2{,}200 tokens per second under the same conditions. For minimum-latency applications, H100 achieves under 10\,ms to first token at batch size~1. The hardware generation upgrade (Ampere to Hopper) accounts for roughly $2\times$ of the throughput gain through higher memory bandwidth (3.35 vs.\ 2.0\,TB/s) and improved Tensor Core throughput; the remaining $2\text{--}2.3\times$ comes from FP8 precision support and TensorRT-LLM's in-flight batching, illustrating the cross-layer nature of the improvement.

\paragraph{Serving framework comparison.}
The LLM serving framework landscape exhibits the same compiler-level variation observed on edge platforms. Table~\ref{tab:datacenter_frameworks} presents a representative H100 SXM comparison of three major serving engines evaluated on the same model under controlled conditions \cite{spheron2026benchmark}.

\begin{table}[t]
\centering
\caption{LLM serving framework comparison on H100 SXM 80\,GB. Throughput in output tokens/s; TTFT in ms. Data from Spheron \cite{spheron2026benchmark}.}
\label{tab:datacenter_frameworks}
\footnotesize
\begin{tabularx}{0.505\linewidth}{>{}p{1.3cm}rrrrrr}
\toprule
& \multicolumn{3}{c}{Throughput (tok/s)} & \multicolumn{3}{c}{TTFT p50 (ms)} \\
\cmidrule(lr){2-4} \cmidrule(lr){5-7}
Conc. & vLLM & TRT & SGLang & vLLM & TRT & SGLang \\
\midrule
1 & 120 & 130 & 125 & 45 & 38 & 42 \\
10 & 650 & 710 & 680 & 120 & 105 & 112 \\
50 & 1{,}850 & 2{,}100 & 1{,}920 & 380 & 340 & 360 \\
100 & 2{,}400 & 2{,}780 & 2{,}460 & 740 & 680 & 710 \\
\bottomrule
\end{tabularx}
\end{table}

TensorRT-LLM leads at every concurrency level, but the gap is smallest at low concurrency ($\sim$8\% at 1 request) and largest at moderate concurrency ($\sim$13\% at 50 requests). At high concurrency (100 requests), the gap compresses again to $\sim$12\%, and TTFT for all three frameworks enters the 680--740\,ms range. This convergence at high load confirms the queuing model's prediction (Section~5.2): as utilization $\rho$ approaches saturation, system-level queuing dynamics dominate over compiler-level differences. The LMSYS team reports that SGLang achieves up to $3.1\times$ higher throughput than vLLM on Llama-70B in offline scenarios \cite{zheng2024sglang}, but that advantage narrows under online serving with latency SLAs.

MLPerf Inference v5.0 \cite{mlperf2025v5} provides the most rigorous cross-submission comparison available. The benchmark introduced Llama-3.1-405B with interactive latency targets (TTFT $\le$ 2\,s, TPOT 20--50\,ms) derived from user experience studies. In the v5.1 round, open-source serving engines on H200 GPUs closed the gap with NVIDIA's proprietary implementation, reaching $\sim$90\% of NVIDIA's $\sim$35{,}000 TPS on Llama-2-70B, up from 60--80\% in the prior round. Scaling to multiple servers, vLLM v0.9.2 achieved 58{,}617 TPS on 2 servers and 87{,}334 TPS on 3 servers (1.9$\times$ and 2.8$\times$ respectively), demonstrating near-linear multi-node scaling \cite{mlperf2025v51}.

\paragraph{Kernel-level optimization at the compiler-system boundary.}
FlashAttention-3 \cite{shah2024flashattention3} exemplifies optimization at the boundary between compiler and system layers. By exploiting Hopper-specific hardware features (asynchronous Tensor Core execution, TMA-based data movement, and FP8 block quantization), FlashAttention-3 achieves 840\,TFLOP/s in BF16 (85\% utilization of H100 peak) and 1.3\,PFLOP/s in FP8, representing a $1.5\text{--}2.0\times$ speedup over FlashAttention-2. This is a concrete instance of the roofline shift described in Section~3: by tiling computation to exploit L2 and SMEM residency while overlapping compute and memory operations, FlashAttention-3 moves the attention kernel from the HBM slope toward the compute-bound regime. The gain is invisible to model-level analysis and inaccessible without hardware-specific compiler support.

\paragraph{Quantization-serving co-design.}
QServe \cite{lin2024qserve}, a W4A8KV4 quantization and serving co-design system, demonstrates that jointly optimizing quantization and the serving runtime can achieve $1.2\text{--}1.4\times$ higher throughput than TensorRT-LLM for Llama-3-8B and $2.4\text{--}3.5\times$ for Qwen1.5-72B on A100 and L40S GPUs. The key insight is that existing INT4 quantization methods suffer from 20--90\% runtime overhead during dequantization on GPU CUDA cores. QServe addresses this through progressive quantization for low-overhead W4A8 GEMM and SmoothAttention for 4-bit KV cache accuracy recovery. This result demonstrates that the quantization format and the serving engine must be co-designed: the W4A8KV4 format exploits a specific combination of weight quantization, activation precision, and KV cache compression that a general-purpose runtime cannot replicate.

\subsection{Quantization Accuracy Recovery at Scale}

Kurtic et al.\ \cite{kurtic2024bf16} conduct what is, to our knowledge, the most comprehensive quantization evaluation to date: over 500{,}000 individual evaluations across the entire Llama-3.1 model family (8B, 70B, 405B) using FP8, INT8, and INT4 formats on academic benchmarks and real-world tasks. Their findings confirm and extend the quantization analysis of Section~3.

FP8 weight-and-activation quantization (W8A8-FP) is effectively lossless across all model scales, with accuracy degradation consistently below measurement noise. INT8 (W8A8-INT) achieves surprisingly low degradation of $1\text{--}3\%$ across benchmarks. INT4 weight-only quantization (W4A16-INT) is more competitive than previously assumed, rivaling 8-bit in many scenarios. These results hold across both academic benchmarks (MMLU, HumanEval) and open-ended generation quality assessments.

The deployment implication connects directly to the memory footprint decomposition of Section~4. For Llama-3.1-70B, BF16 weights alone consume 140\,GB, requiring at least two 80\,GB A100 GPUs with tensor parallelism. INT4 quantization reduces the weight component to 35\,GB, fitting on a single GPU and eliminating the inter-device communication overhead entirely. The throughput gain from avoiding tensor parallelism communication can exceed the throughput gain from the quantization itself, a cross-layer interaction that single-layer quantization papers systematically miss.

These limitations do not invalidate the evidence but restrict its scope. Cross-platform and cross-architecture generalization with fully aligned methodology remains an open problem (Section~9).

%---------------------------------------------------------------------------
% Section: 7
%---------------------------------------------------------------------------
\section{Discussion and Limitations}

The empirical synthesis of Section~6 demonstrated that the analytical framework developed in Sections~3--5 provides explanatory power for observed deployment phenomena across both edge and data center platforms. This section examines the limitations of both the framework and the evidence base, identifies recurring patterns that cut across deployment contexts, and assesses what the current literature does well and where it falls short.

\subsection{Strengths of the Cross-Layer Perspective}

The central thesis of this paper is that deployment optimization is irreducibly cross-layer. The evidence supports this thesis from multiple angles.

On edge hardware, the $125\times$ latency spread across frameworks for a single model (Section~6.2) demonstrates that the compiler layer alone can account for two orders of magnitude in performance variation on fixed hardware. The energy reversal for Swin-T (Section~6.4, Figure~\ref{fig:energy}) demonstrates that the optimal deployment configuration changes depending on which metric is constrained, a phenomenon that single-metric benchmarks cannot capture.

On data center hardware, the QServe result (Section~6.7) demonstrates that co-designing quantization format and serving runtime yields throughput gains that neither technique achieves independently. The observation that framework performance gaps narrow under high concurrency (Section~6.7) confirms the queuing model's prediction that system-level dynamics dominate compiler-level differences near saturation.

The Kurtic et al.\ quantization study (Section~6.8) demonstrates that INT4 quantization of a 70B model can eliminate the need for tensor parallelism entirely, converting a multi-GPU deployment into a single-GPU deployment. The throughput gain from removing inter-device communication overhead is a system-level benefit triggered by a model-level optimization, precisely the cross-layer interaction that the taxonomy of Section~3 was designed to make visible.

\subsection{Limitations of the Analytical Models}

The three analytical models developed in Section~5 are useful but imperfect.

The roofline model assumes a single bottleneck memory level per kernel. In practice, modern GPU kernels access L1/SMEM, L2, and HBM simultaneously, and the effective bandwidth depends on the data reuse pattern, which is kernel-specific. The model also assumes no overlap between compute and memory operations, which underestimates the performance of well-pipelined kernels. These limitations mean that the roofline bound is conservative: it correctly identifies the bottleneck resource but may overestimate the gap between the bound and achieved performance.

The queuing model (M/G/1) assumes Poisson arrivals and a single logical server. Real inference serving systems use continuous batching, which effectively creates a multi-server system with correlated service times. The M/G/1 model captures the qualitative shape of the response time curve (the hockey-stick amplification near saturation) but underestimates tail latency at moderate utilization and overestimates it at very high utilization where admission control intervenes. A more accurate model would require workload-specific simulation, which sacrifices the analytical tractability that makes the M/G/1 framework useful for first-order reasoning.

The PEI ranking functional assumes that all metrics in the vector are independently measurable and that the weight vector accurately reflects application priorities. In practice, metrics interact: reducing latency may increase power (as observed for TensorRT on the Orin), and reducing memory footprint via quantization may change accuracy in ways that depend on the input distribution. The PEI functional treats these interactions as external to the ranking, which is correct for a given measurement but may mislead if the measurements themselves shift when the deployment configuration changes.

\subsection{Limitations of the Evidence Base}

The evidence synthesized in this paper has structural limitations that reflect the state of the field rather than choices specific to this study.

\paragraph{Vendor concentration.} The controlled empirical evidence (Section~6.1--6.5) comes from a single NVIDIA platform. The data center evidence (Section~6.7--6.8) also focuses on NVIDIA GPUs (A100, H100, L40S). AMD, Intel, Google TPU, and custom accelerator deployments are underrepresented in the literature that meets our inclusion criteria. This is not because those platforms are unimportant but because the published studies that report latency, throughput, accuracy, power, and software version metadata simultaneously are overwhelmingly NVIDIA-centric.

\paragraph{Vision model bias.} The controlled Jetson study evaluates vision models (ResNet, MobileNet, Swin-T). While the data center subsections address LLM serving, the paper lacks a controlled within-study comparison of LLM serving frameworks with the same rigor as the Ratul et al.\ edge study. The prefill/decode asymmetry (Eq.~\ref{eq:autoregressive_latency}), KV cache scaling (Eq.~\ref{eq:memory_decomp}), and continuous batching dynamics of LLM serving create deployment bottlenecks that differ qualitatively from vision model inference. A controlled multi-framework LLM serving benchmark with power measurement would significantly strengthen the empirical foundation.

\paragraph{Missing variance reporting.} Nearly all sources cited in this paper report mean latency or median throughput without confidence intervals, standard deviations, or percentile distributions. This makes it impossible to assess whether observed differences between frameworks are statistically significant or within measurement noise. The comparability policy of Section~2 mitigates this by restricting literal comparisons to within-study results, but even within-study comparisons are weakened by the absence of variance data.

\paragraph{Temporal fragility.} Deployment benchmarks are perishable. A TensorRT update, a new CUDA toolkit, or a vLLM release can shift framework rankings by $10\text{--}30\%$ (Table~\ref{tab:comparability}). The specific numbers in Section~6 reflect a snapshot in time. The analytical framework and the qualitative patterns (cross-layer entanglement, energy reversals, queuing amplification) are durable; the specific numerical rankings are not.

\subsection{What the Literature Does Well}

Despite the limitations above, the deployment literature has made substantial progress in three areas.

First, the MLPerf Inference benchmark \cite{reddi2020mlperf} has established a credible methodology for standardized comparison, with defined scenarios, quality targets, and reporting requirements. Studies that follow MLPerf methodology produce results that are meaningfully comparable within the constraints of their hardware and software configurations.

Second, the systems community has converged on continuous batching \cite{yu2022orca} and paged KV cache management \cite{kwon2023vllm} as foundational techniques for LLM serving. These are no longer experimental; they are production infrastructure. The analytical framework of this paper can take them as given when modeling online serving throughput.

Third, the quantization community has moved beyond accuracy-only evaluation. The Kurtic et al.\ study \cite{kurtic2024bf16} evaluates quantization across 500{,}000+ configurations, and the QServe system \cite{lin2024qserve} co-designs quantization with the serving runtime. This cross-layer thinking, which was rare five years ago, is becoming the norm in systems-aware quantization research.

\subsection{What the Literature Still Lacks}

Figure~\ref{fig:evidence_coverage} visualizes these gaps as a radar chart across the seven deployment metrics. Three gaps remain conspicuous.

\input{figures/fig_evidence_coverage}

First, there is no published study that evaluates the same model on the same serving framework across three or more hardware platforms (e.g., Jetson Orin, A100, H100, AMD MI300X) with aligned methodology and power measurement. Such a study would enable the cross-platform generalization that this paper can only discuss qualitatively.

Second, energy per inference is almost never reported as a first-class metric. Power measurements, when they appear, are typically accelerator-only and instantaneous rather than platform-level and sustained. The energy decomposition of Eq.~\ref{eq:power_decomp} cannot be applied to most published results because the component terms are not reported.

Third, compilation cost $C(c)$ remains the least reported metric in the deployment vector. TVM autotuning campaigns can require hours of GPU time; TensorRT engine builds for large models can take tens of minutes. These costs directly affect deployment velocity but are systematically omitted from benchmark papers, making it impossible to include compilation cost in the PEI ranking without custom measurement.

%---------------------------------------------------------------------------
% Section: 8
%---------------------------------------------------------------------------
\section{Design Guidance for Practitioners}

The analytical framework and empirical evidence of the preceding sections converge on a set of actionable principles for deployment engineers. This section distills those principles into concrete guidance, organized around the three decision points that every deployment team faces: what to measure, how to eliminate infeasible candidates, and how to select among survivors.

\subsection{What to Measure Before Deciding}

The most common deployment failure is optimizing the wrong metric. The metric vector of Section~4 defines seven quantities, but not all are equally expensive to obtain. We recommend a staged measurement protocol.

\input{figures/fig_decision_flow}

\paragraph{Stage 1: Free metrics.} Model size ($N_{\text{params}} \cdot w_p$) and peak memory footprint $M(c)$ can be computed from the model architecture and precision format without running any hardware. These determine whether a candidate fits on the target device at all. Candidates that exceed the available HBM should be eliminated immediately.

\paragraph{Stage 2: Cheap metrics.} Accuracy $A(c)$ requires a validation run but not a serving deployment. A single forward pass over the evaluation set on any compatible hardware is sufficient. Candidates that fall below $A_{\min}$ should be eliminated before investing in latency measurement.

\paragraph{Stage 3: Expensive metrics.} Latency $T(c)$, throughput $Q(c)$, power $P(c)$, and energy $E(c)$ require measurement on the target hardware under realistic conditions. These should be measured only for candidates that survive Stages 1 and 2. Algorithm~\ref{alg:selection} formalizes this ordering.

This staged approach can reduce the total number of hardware measurements by an order of magnitude. If 50 candidate configurations are under consideration and 35 are eliminated by memory or accuracy constraints, only 15 require expensive on-device profiling. Figure~\ref{fig:decision_flow} visualizes this workflow as a decision flowchart.

\subsection{Worked Example: Applying the Selection Algorithm}

To make the staged protocol concrete, we apply Algorithm~\ref{alg:selection} to the Jetson AGX Orin data from Table~\ref{tab:jetson} with the following practitioner constraints: $A_{\min} = 73\%$, $T_{\max} = 10$\,ms, $E_{\max} = 0.25$\,J. We set equal PEI weights ($w_A = w_Q = w_P = w_M = 1$) for simplicity. Table~\ref{tab:worked} traces the elimination.

\begin{table}[t]
\centering
\caption{Worked selection example on Jetson AGX Orin data. Constraints: $A_{\min}=73\%$, $T_{\max}=10$\,ms, $E_{\max}=0.25$\,J. Candidates failing any constraint are struck through.}
\label{tab:worked}
\footnotesize
\begin{tabularx}{0.43\linewidth}{>{\raggedright\arraybackslash}p{1.0cm}>{\raggedright\arraybackslash}p{0.9cm}rrrl}
\toprule
Model & Fwk & Acc & $T$ (ms) & $E$ (J) & Result \\
\midrule
R-152 & TRT & 76.6 & 2.28 & 0.065 & $\checkmark$ Pareto \\
R-152 & TVM & 74.3 & 7.43 & 0.163 & $\checkmark$ Feasible \\
R-152 & PyT & 75.3 & 9.24 & 0.201 & $\checkmark$ Feasible \\
R-152 & ORT & 72.0 & 285.5 & 4.054 & \textcolor{papersecondary}{\texttimes\ Acc.} \\
R-152 & JAX & 72.0 & 29.1 & 0.439 & \textcolor{papersecondary}{\texttimes\ Acc.} \\
\midrule
Mob & TRT & 70.6 & 1.14 & 0.015 & \textcolor{papersecondary}{\texttimes\ Acc.} \\
Mob & PyT & 69.7 & 4.20 & 0.060 & \textcolor{papersecondary}{\texttimes\ Acc.} \\
Mob & TVM & 71.2 & 9.53 & 0.132 & \textcolor{papersecondary}{\texttimes\ Acc.} \\
\midrule
Swin & PyT & 77.8 & 7.27 & 0.137 & $\checkmark$ Feasible \\
Swin & TVM & 77.8 & 5.28 & 0.085 & $\checkmark$ Pareto \\
Swin & TRT & 75.7 & 3.95 & 0.108 & $\checkmark$ Feasible \\
\bottomrule
\end{tabularx}
\end{table}

Of 11 candidates, 5 are eliminated by accuracy ($A < 73\%$), and all MobileNetV2 configurations fail this gate despite having the lowest latency and energy. The 6 survivors all satisfy $T_{\max}$ and $E_{\max}$. Pareto filtering identifies R-152/TRT (best latency and energy among ResNets) and Swin-T/TVM (highest accuracy with lowest energy among Swin-Ts) as non-dominated. The PEI maximizer selects the final candidate based on the weight vector: with equal weights, Swin-T/TVM wins on combined accuracy-energy; with latency-heavy weights, R-152/TRT wins.

The key insight is that all MobileNetV2 configurations, despite being the fastest and most energy-efficient, are eliminated in Stage~2 because they fail the accuracy constraint. Speed without accuracy is useless.

\subsection{Decision Tree for Common Scenarios}

The following decision logic covers the most frequently encountered deployment scenarios.

\paragraph{Latency-critical, single-device (e.g., real-time inference endpoint).}
Filter by $M(c) \le M_{\text{HBM}}$ and $A(c) \ge A_{\min}$. Among survivors, select the configuration with lowest $T(c)$. If multiple configurations achieve similar latency ($<5\%$ difference), break ties on energy $E(c)$. TensorRT or framework-specific AOT compilation is typically optimal in this regime.

\paragraph{Throughput-critical, multi-tenant (e.g., batch processing or high-QPS serving).}
Filter by $M(c) \le M_{\text{HBM}}$ and $A(c) \ge A_{\min}$. Among survivors, select the configuration with highest $Q(c)$ under the target SLA using Eq.~\ref{eq:max_throughput}. Continuous batching \cite{yu2022orca} and PagedAttention \cite{kwon2023vllm} are prerequisites. The queuing amplification of Eq.~\ref{eq:sensitivity} means that even small per-request latency reductions compound into large throughput gains near saturation.

\paragraph{Memory-constrained (e.g., edge device or single consumer GPU).}
The binding constraint is $M(c) \le M_{\text{device}}$. Quantization is the primary lever: INT4 weight-only quantization reduces the weight component by $4\times$ relative to FP16 (Section~6.8). If the quantized model fits on a single device while the unquantized model requires tensor parallelism, the throughput gain from eliminating inter-device communication can exceed the gain from the quantization itself.

\paragraph{Energy-constrained (e.g., battery-powered edge, thermally throttled SoC).}
Optimize $E(c) = P(c) \cdot T(c)$ rather than $T(c)$ alone. As the Swin-T example demonstrates (Section~6.4, Figure~\ref{fig:energy}), the energy-optimal framework may differ from the latency-optimal one. Lower-power runtimes that sacrifice some latency can achieve lower total energy per inference.

\subsection{When to Re-evaluate}

Deployment configurations are not permanent. Three events should trigger re-evaluation.

\paragraph{Model update.} Retraining, fine-tuning, or architecture changes alter the computational graph, invalidating compiler-specific optimizations (autotuned tile sizes, fused kernel configurations). Re-profile after any model change.

\paragraph{Hardware refresh.} Migrating from A100 to H100 changes peak FLOP/s, memory bandwidth, and available precision formats (FP8 on Hopper). The roofline parameters $\pi_h$ and $\beta_h$ change, potentially shifting kernels between memory-bound and compute-bound regimes. Re-profile on the new hardware.

\paragraph{Workload shift.} Changes in request arrival rate $\lambda$, sequence length distribution, or batch size distribution alter the queuing operating point $\rho$ and the effective service time distribution $C_s$. A configuration that was optimal at $\rho = 0.5$ may be suboptimal at $\rho = 0.8$ where the sensitivity factor $1/(1-\rho)^2$ is $6.25\times$ larger.

\subsection{Anti-Patterns to Avoid}

We conclude with three deployment anti-patterns that the framework exposes.

\paragraph{Benchmark on batch size 1, deploy on batch size 32.} Latency at batch size 1 is dominated by kernel launch overhead and memory latency. Latency at batch size 32 is dominated by compute throughput and memory bandwidth. The ranking of frameworks can and does reverse between these two regimes. Always profile at the deployment batch size.

\paragraph{Optimize latency, ignore energy.} As Section~6.4 demonstrates, the fastest framework may not be the most energy efficient. For deployments where total energy cost matters (cloud billing, battery life, thermal envelope), energy per inference must be measured and optimized explicitly.

\paragraph{Quantize without re-profiling the serving stack.} INT4 quantization changes operational intensity (Eq.~\ref{eq:oi_precision}), memory footprint (Eq.~\ref{eq:memory_decomp}), and potentially the number of GPUs required. These changes propagate through the compiler and system layers in ways that cannot be predicted from the model layer alone. The QServe result \cite{lin2024qserve} demonstrates that co-designing quantization with the serving runtime yields $2\text{--}3.5\times$ higher throughput than applying quantization to a general-purpose runtime.

%---------------------------------------------------------------------------
% Section: 9
%---------------------------------------------------------------------------
\section{Open Research Problems}

The gaps identified in Section~7 and the practitioner challenges cataloged in Section~8 point to five concrete research directions. Each represents a problem where the analytical framework of this paper defines what a solution would look like but where the necessary tools, data, or methodology do not yet exist. Figure~\ref{fig:research_roadmap} maps these problems to the deployment layers they span.

\input{figures/fig_research_roadmap}

\subsection{Joint Compiler-Serving Co-optimization}

Current deployment pipelines treat compilation and serving as sequential, independent stages: the compiler produces an optimized binary, and the serving system executes it. The QServe result (Section~6.7) demonstrates that co-designing these stages yields $2\text{--}3.5\times$ throughput gains. Yet no general-purpose framework exists for jointly optimizing compiler decisions (precision format, fusion strategy, kernel selection) and serving decisions (batch size, admission policy, KV cache allocation) in a single optimization loop.

The technical challenge is that the search space is combinatorially large. A compiler may offer $10^3$ kernel configurations per operator, and a serving system may offer $10^2$ batching and scheduling configurations. Joint optimization over $10^5$ combinations per operator is intractable by exhaustive search. Promising directions include Bayesian optimization over the joint space, reinforcement learning with compiler and serving actions as a unified action space, and hierarchical decomposition where the compiler optimizes locally and the serving system optimizes globally.

\subsection{Cross-Hardware Latency Prediction}

The roofline model (Section~5.1) provides hardware-parameterized bounds, but predicting actual latency on a new hardware target from measurements on an existing target remains unsolved. A model profiled on an A100 cannot reliably predict its latency on an H100 because differences in shared memory size, warp scheduler behavior, tensor core microarchitecture, and L2 cache policy produce nonlinear performance shifts that the roofline's linear bounds cannot capture \cite{lee2021help}.

A practical solution would enable deployment teams to evaluate candidate hardware without physical access, reducing the cost of hardware selection from weeks of profiling to hours of prediction. Transfer learning approaches that fine-tune latency predictors across hardware generations show promise but require large profiling datasets that are expensive to collect.

\subsection{Standardized Energy and Variance Reporting}

As Figure~\ref{fig:evidence_coverage} illustrates, energy per inference and compilation cost are the two most underreported metrics in the deployment literature. The energy gap is particularly consequential for edge and sustainability-focused deployments.

We propose a minimal reporting standard for deployment benchmarks:
\begin{enumerate}[leftmargin=1.5em]
  \item Platform-level sustained power draw (not instantaneous, not accelerator-only) measured over at least 60 seconds of continuous inference.
  \item Latency reported as mean, p50, p95, and p99 with at least 1{,}000 inference iterations after warmup.
  \item Energy per inference computed as $E = \bar{P} \cdot \bar{T}$ with both components reported.
  \item Compilation time reported as wall-clock time for the full optimization pipeline.
  \item Hardware and software metadata: GPU model, driver version, CUDA toolkit version, framework version, precision format, batch size.
\end{enumerate}

Adopting this standard would enable the PEI ranking of Section~5.3 to be applied to published results without custom measurement, transforming the analytical framework from a conceptual tool to a practical one.

\subsection{Reproducible Benchmark Artifacts}

The temporal fragility identified in Section~7.3 means that benchmark results are perishable. A result published with TensorRT 8.6 may not reproduce with TensorRT 9.0. The MLPerf Inference benchmark \cite{reddi2020mlperf} addresses this partially through versioned submissions, but most published studies do not provide containerized or otherwise reproducible execution environments.

A research infrastructure contribution would be a standardized container specification for inference benchmarks that pins all software dependencies (driver, toolkit, framework, model weights, calibration data) and produces deterministic results across runs. This would convert the comparability policy of Section~2 from a post-hoc analysis rule to a pre-registration requirement.

\subsection{LLM-Specific Deployment Models}

The queuing and roofline models developed in Section~5 are general-purpose. LLM serving introduces domain-specific phenomena that these models do not capture: the prefill/decode phase asymmetry (Eq.~\ref{eq:autoregressive_latency}), KV cache growth with sequence length (Eq.~\ref{eq:memory_decomp}), speculative decoding, and prefix caching. Each of these creates optimization opportunities and failure modes that a general-purpose framework misses.

Extending the analytical framework to LLM-specific models would require:
\begin{enumerate}[leftmargin=1.5em]
  \item A two-phase roofline that separately bounds prefill (compute-bound, high OI) and decode (memory-bound, low OI) latency.
  \item A queuing model that accounts for variable-length service times with heavy-tailed distributions ($C_s \gg 1$), correlated arrivals from multi-turn conversations, and the memory pressure of KV cache accumulation.
  \item A memory-aware admission controller that dynamically adjusts the maximum batch size based on available KV cache headroom rather than a fixed concurrency limit.
\end{enumerate}

These extensions would connect the general framework of this paper to the specific bottlenecks that dominate production LLM serving.

\section{Conclusion}

The performance of a deployed AI model is not determined by the model alone, nor by the compiler alone, nor by the serving system alone. It is determined by their interaction. This paper has developed the analytical vocabulary, formal models, and empirical evidence to make that claim precise.

We introduced a three-layer deployment taxonomy (model, compiler, system) and demonstrated that cross-layer interactions produce phenomena that no single-layer analysis can predict: a $125\times$ latency spread across frameworks for the same model on the same hardware (Section~6.2), an energy ranking reversal where the fastest framework is not the most efficient (Section~6.4), and a quantization-induced elimination of tensor parallelism that converts a multi-GPU deployment into a single-GPU one (Section~6.8). We formalized inference optimization using three analytical models. The roofline model (Eq.~\ref{eq:roofline_time}) identifies whether a kernel is compute-bound or memory-bound, preventing wasted optimization effort on the wrong bottleneck. The M/G/1 queuing model (Eq.~\ref{eq:mg1}) quantifies the nonlinear amplification of service time into response time, showing that a 1\,ms latency reduction at $\rho = 0.9$ yields a 100\,ms response time improvement. The Performance Efficiency Index (Eq.~\ref{eq:pei}) provides a monotonic, scale-invariant ranking functional for multi-objective deployment selection, with a constraint-aware algorithm (Algorithm~\ref{alg:selection}) that filters candidates by cost-ordered stages.

We enforced a comparability protocol that classifies every claim as measured, derived, or analytical evidence and restricts literal numerical comparison to within-study results. This protocol exposed why cross-paper benchmark comparisons are hazardous: latency definitions, batch semantics, power scope, and software versions differ in ways that can reverse framework rankings (Table~\ref{tab:comparability}). In addition, we synthesized empirical evidence spanning edge platforms \cite{ratul2025electronics} and data center GPUs, including H100 vs.\ A100 throughput scaling, LLM serving framework comparisons \cite{kwon2023vllm}, quantization-serving co-design \cite{lin2024qserve}, and the largest published quantization evaluation to date \cite{kurtic2024bf16}. The evidence consistently confirms the cross-layer thesis: deployment outcomes are governed by interactions between model compression, compiler transformations, and serving policies that cannot be predicted from any single layer in isolation.

Five open problems remain (Section~9): joint compiler-serving co-optimization, cross-hardware latency prediction, standardized energy and variance reporting, reproducible benchmark artifacts, and LLM-specific deployment models. Each represents a gap where the analytical framework defines what a solution would look like but where the necessary tools do not yet exist. The central message is simple. Deployment optimization is not a framework ranking problem. It is a constrained, multi-objective, cross-layer systems problem. The paper that treats it as such with the right taxonomy, metrics, analytical models, and evidence protocol provides the foundation for every deployment decision that follows.

\section{Acknowledgment}
The authors would like to thank Michael Robillard and Krish Iyer (Office of the CTO, Dell Technologies) for their insightful feedback and stimulating discussions that significantly contributed to the development of this technical review.

\bibliographystyle{IEEEtran}
\bibliography{references}

\end{document}

%% file: figures/fig_taxonomy.tex
% Figure 1: Deployment pipeline with cross-layer optimization
\begin{figure*}[t]
\centering
\begin{tikzpicture}[
  >=Latex,
  font=\sffamily\small,
  stage/.style={
    font=\sffamily\scriptsize, rectangle, rounded corners=2pt, thick,
    minimum height=1.0cm, minimum width=1.8cm, align=center,
    draw=#1!70!black, fill=#1!12, text=paperblack
  },
  annot/.style={
    font=\sffamily\scriptsize, text=papermuted, align=center
  },
  optlabel/.style={
    font=\sffamily\bfseries\footnotesize, text=#1!70!black
  },
  arr/.style={-{Latex[length=2.0mm]}, line width=0.9pt, draw=paperaxis}
]

% ---- Pipeline stages (left to right, tighter spacing) ----
\node[stage=paperprimary] (train) {Trained\\Model};
\node[stage=papersecondary, right=0.7cm of train] (compress) {Compress\\\& Quantize};
\node[stage=papertertiary, right=0.7cm of compress] (graphir) {Graph IR\\Optimize};
\node[stage=papertertiary, right=0.7cm of graphir] (codegen) {Kernel\\Code Gen};
\node[stage=paperaccent, right=0.7cm of codegen] (runtime) {Runtime\\Execution};
\node[stage=paperaccent, right=0.7cm of runtime] (serving) {Serving\\System};

% ---- Forward arrows ----
\draw[arr] (train) -- (compress);
\draw[arr] (compress) -- (graphir);
\draw[arr] (graphir) -- (codegen);
\draw[arr] (codegen) -- (runtime);
\draw[arr] (runtime) -- (serving);

% ---- Optimization layer brackets (below) ----
\draw[decorate, decoration={brace, amplitude=4pt, mirror}, thick, draw=papersecondary!70!black]
  ([yshift=-2pt]train.south west) -- ([yshift=-2pt]compress.south east)
  node[midway, below=6pt, optlabel=papersecondary] {Model Level};

\draw[decorate, decoration={brace, amplitude=4pt, mirror}, thick, draw=papertertiary!70!black]
  ([yshift=-2pt]graphir.south west) -- ([yshift=-2pt]codegen.south east)
  node[midway, below=6pt, optlabel=papertertiary] {Compiler Level};

\draw[decorate, decoration={brace, amplitude=4pt, mirror}, thick, draw=paperaccent!70!black]
  ([yshift=-2pt]runtime.south west) -- ([yshift=-2pt]serving.south east)
  node[midway, below=6pt, optlabel=paperaccent] {System Level};

% ---- Technique annotations (above) ----
\node[annot, above=3pt of compress] {Pruning, Distillation,\\Mixed Precision};
\node[annot, above=3pt of graphir] {Fusion, Layout\\Transform, Folding};
\node[annot, above=3pt of codegen] {Autotuning,\\Tensor Core Map};
\node[annot, above=3pt of runtime] {Memory Tiering,\\Kernel Dispatch};
\node[annot, above=3pt of serving] {Batching, Admission\\Control, Scaling};

% ---- Feedback arrow (clean path below the braces) ----
\draw[arr, densely dashed, draw=papermuted]
  (serving.south) -- ++(0,-1.2)
  -| (compress.south)
  node[pos=0.5, below=2pt, annot] {Deployment feedback: accuracy / latency / energy telemetry};

\end{tikzpicture}
\caption{End to end deployment pipeline. A trained model passes through model level compression, compiler level graph and kernel optimization, and system level serving. The dashed feedback path represents iterative retuning driven by deployment telemetry.}
\label{fig:taxonomy}
\end{figure*}

%% file: figures/fig_roofline.tex
% Figure 2: Roofline model — NVIDIA A100 SXM4
% Multiple memory bandwidth tiers shown as parallel ascending slopes.
% A single FP16 TC compute ceiling. FP32 shown as thin reference.
% Ridge points mark the transition from memory-bound to compute-bound.
\begin{figure}[t]
\centering
\begin{tikzpicture}
\begin{axis}[
  width=0.72\linewidth,
  height=0.48\linewidth,
  xlabel={Operational Intensity (FLOP/Byte)},
  ylabel={Attainable Performance (TFLOP/s)},
  xmode=log, ymode=log,
  log basis x=2,
  xmin=0.25, xmax=8192,
  ymin=0.4, ymax=500,
  grid=major,
  grid style={draw=papergrid!50, line width=0.3pt},
  axis line style={draw=paperaxis, thick},
  tick label style={font=\small},
  label style={font=\small},
  clip=false
]

% ==== Memory-bound region: triangle below HBM slope, left of HBM ridge ====
\fill[paperprimary, opacity=0.09]
  (axis cs:0.25,0.4) -- (axis cs:0.25,0.5)
  -- (axis cs:156,312) -- (axis cs:156,0.4) -- cycle;

% ==== Compute-bound region: rectangle right of HBM ridge, below FP16 ceiling ====
\fill[papersecondary, opacity=0.04]
  (axis cs:156,0.4) -- (axis cs:156,312)
  -- (axis cs:8192,312) -- (axis cs:8192,0.4) -- cycle;

% ==== Memory bandwidth slopes (parallel, different BW) ====
% L1/Shared memory: 19 TB/s
\addplot[domain=0.25:8192, samples=500, color=paperquaternary, line width=0.9pt, loosely dotted]
  {min(19*x, 312)};

% L2 cache: 6 TB/s
\addplot[domain=0.25:8192, samples=500, color=papertertiary, line width=0.9pt, densely dashed]
  {min(6*x, 312)};

% HBM: 2 TB/s
\addplot[domain=0.25:8192, samples=500, color=paperprimary, line width=1.2pt, solid]
  {min(2*x, 312)};

% ==== Compute ceilings ====
% FP32: 19.5 TFLOP/s (thin reference)
\addplot[domain=1.5:350, samples=2, color=papermuted, line width=0.8pt, loosely dashed]
  {19.5};
% FP16 TC: 312 TFLOP/s (main ceiling)
\addplot[domain=9:8192, samples=2, color=papersecondary, line width=1.2pt, densely dashed]
  {312};

% ==== Bandwidth slope labels ====
% With xmax=8192 (15 octaves) and same y-range (3.1 decades),
% the visual slope angle is atan(dy/dx) = atan(1.0) ≈ 45 degrees
\node[font=\small, text=paperquaternary, rotate=42, anchor=south west]
  at (axis cs:0.4, 7.6) {L1/SMEM: 19\,TB/s};
\node[font=\small, text=papertertiary, rotate=42, anchor=south west]
  at (axis cs:0.4, 2.4) {L2: 6\,TB/s};
\node[font=\small, text=paperprimary, rotate=42, anchor=south west]
  at (axis cs:0.4, 0.8) {HBM: 2\,TB/s};

% ==== Compute ceiling labels (right end of flat portion) ====
\node[font=\small, text=papermuted, anchor=south west]
  at (axis cs:350, 15) {FP32: 19.5 TFLOP/s};
\node[font=\small, text=papersecondary, anchor=south west]
  at (axis cs:0.3, 242) {FP16 TC: 312 TFLOP/s};

% ==== Ridge point vertical lines and labels ====
\draw[thin, densely dotted, draw=paperquaternary!60]
  (axis cs:16.4, 0.4) -- (axis cs:16.4, 312);
\draw[thin, densely dotted, draw=papertertiary!60]
  (axis cs:52, 0.4) -- (axis cs:52, 312);
\draw[thin, densely dotted, draw=paperprimary!60]
  (axis cs:156, 0.4) -- (axis cs:156, 312);

% Ridge labels inside plot
\node[font=\scriptsize, text=paperquaternary!80!black, rotate=90, anchor=south]
  at (axis cs:16.4, 0.9) {L1 ridge};
\node[font=\scriptsize, text=papertertiary!80!black, rotate=90, anchor=south]
  at (axis cs:52, 0.9) {L2 ridge};
\node[font=\scriptsize, text=paperprimary!80!black, rotate=90, anchor=south]
  at (axis cs:156, 1.05) {HBM ridge};

% ==== Region labels ====
\node[font=\small\itshape, text=paperprimary!80!black, anchor=north west]
  at (axis cs:1, 1) {Memory bound};
\node[font=\small\itshape, text=papersecondary!80!black, anchor=north east]
  at (axis cs:2000, 1) {Compute bound};

% ==== Kernel workload points ====
% Embedding: deep HBM-bound, below HBM slope
\addplot[only marks, mark=*, mark size=3pt, color=paperprimary, forget plot]
  coordinates {(0.5, 0.65)};
\node[font=\small, text=paperprimary, anchor=west, rotate=42, xshift=4pt]
  at (axis cs:0.5, 0.62) {Embedding};

% Attention with FlashAttention: exploits L2 SRAM, above HBM roof at this OI
\addplot[only marks, mark=triangle*, mark size=3pt, color=papertertiary, forget plot]
  coordinates {(8, 28)};
\node[font=\small, text=papertertiary, anchor=south west, rotate=42, xshift=3pt]
  at (axis cs:9, 22) {Attention (Flash)};

% MLP FP16: compute-bound (OI=400, well right of HBM ridge at 156)
% Large batch transformer FFN: large OI, achieved ~65% of FP16 ceiling
\addplot[only marks, mark=square*, mark size=2.5pt, color=papersecondary, forget plot]
  coordinates {(400, 200)};
\node[font=\small, text=papersecondary, anchor=south, yshift=3pt]
  at (axis cs:400, 90) {MLP (FP16)};

% Conv FP16: deep compute-bound
% High OI large conv layer: achieved ~70% of FP16 ceiling
\addplot[only marks, mark=square*, mark size=2.5pt, color=papersecondary, forget plot]
  coordinates {(2000, 230)};
\node[font=\small, text=papersecondary, anchor=south, yshift=3pt]
  at (axis cs:3000, 105) {Conv (FP16)};

\end{axis}
\end{tikzpicture}
\caption{Roofline model for NVIDIA A100 SXM4 illustrating the memory bandwidth hierarchy (L1/SMEM at 19\,TB/s, L2 at 6\,TB/s, HBM at 2\,TB/s) and the FP16 Tensor Core compute ceiling (312 TFLOP/s). Vertical dotted lines mark ridge points for each memory tier. Kernels operating below the HBM slope are HBM-bound; those between the HBM and L2 slopes benefit from L2 data reuse, as illustrated by FlashAttention. Kernels to the right of the FP16 ridge are compute-bound \cite{williams2009roofline}.}
\label{fig:roofline}
\end{figure}

%% file: figures/fig_pareto.tex
% Figure 3: Pareto frontier of deployment configurations
% Each point = a specific (model, precision, compiler, batch) config.
% Non-dominated points form the frontier. Dominated points fall below it.
% Feasibility constraints define the viable deployment region.
\begin{figure}[t]
\centering
\begin{tikzpicture}
\begin{axis}[
  paper line plot,
  width=0.72\linewidth,
  height=0.48\linewidth,
  xlabel={Inference Throughput (queries/s)},
  ylabel={Top-1 Accuracy (\%)},
  xmin=0, xmax=6000,
  ymin=68, ymax=86,
  grid=major,
  grid style={draw=papergrid, line width=0.4pt},
  legend style={at={(0.98,0.02)}, anchor=south east, font=\small,
    draw=papergrid, fill=paperwhite, rounded corners=1.5pt},
  legend cell align=left,
  tick label style={font=\small},
  label style={font=\small},
  clip=false
]

% ---- Shaded infeasible region above frontier ----
\fill[paperprimary, opacity=0.05]
  (axis cs:300,84.1) -- (axis cs:750,83.6) -- (axis cs:1800,82.2)
  -- (axis cs:3200,79.5) -- (axis cs:4800,76.2)
  -- (axis cs:6000,76.2) -- (axis cs:6000,86) -- (axis cs:300,86) -- cycle;

% ---- Feasibility constraint: minimum accuracy ----
\draw[thick, densely dashed, draw=papersecondary!80!black]
  (axis cs:0,78) -- (axis cs:6000,78);
\node[font=\small, text=papersecondary!80!black, anchor=south west]
  at (axis cs:4700,78.2) {$A_{\min}=78\%$};

% ---- Feasibility constraint: minimum throughput ----
\draw[thick, densely dashed, draw=paperaccent!80!black]
  (axis cs:1000,68) -- (axis cs:1000,86);
\node[font=\small, text=paperaccent!80!black, anchor=south west, rotate=90]
  at (axis cs:1000,68.5) {$Q_{\min}=1000$};

% ---- Shaded feasible region ----
\fill[papertertiary, opacity=0.06]
  (axis cs:1000,78) rectangle (axis cs:6000,86);
\node[font=\small, text=papertertiary!70!black] at (axis cs:5200,84) {Feasible region};

% ---- Dominated candidates: various (model, compiler, precision, batch) configs ----
% These represent real evaluated configs that are strictly inferior to
% at least one Pareto-optimal config on both accuracy and throughput.
\addplot[only marks, mark=o, mark size=2pt, thick, color=papermuted] coordinates {
  (200, 82.0) (500, 81.5) (400, 79.8) (900, 80.1) (1200, 78.5)
  (1500, 77.2) (2000, 76.8) (2500, 75.0) (1800, 74.5) (3000, 73.2)
  (3500, 74.8) (4000, 72.5) (4500, 71.0) (2800, 77.0) (1100, 76.0)
  (600, 78.0) (3800, 75.5) (5200, 70.5) (700, 76.5)
};
\addlegendentry{Dominated configs}

% ---- Pareto frontier line ----
\addplot[color=paperprimary, thick, mark=none] coordinates {
  (300,84.1) (750,83.6) (1800,82.2) (3200,79.5) (4800,76.2)
};
\addlegendentry{Pareto frontier}

% ---- Pareto-optimal candidates ----
\addplot[only marks, mark=*, mark size=3pt, color=paperprimary] coordinates {
  (300, 84.1) (750, 83.6) (1800, 82.2) (3200, 79.5) (4800, 76.2)
};
\addlegendentry{Pareto optimal}

% ---- Labels for Pareto-optimal points (offset to avoid overlap) ----
\node[font=\small, anchor=south, text=paperprimary!80!black, yshift=3pt]
  at (axis cs:400, 84.1) {FP32};
\node[font=\small, anchor=south west, text=paperprimary!80!black, xshift=4pt, yshift=2pt]
  at (axis cs:600, 83.6) {FP16};
\node[font=\small, anchor=south west, text=paperprimary!80!black, xshift=4pt, yshift=2pt]
  at (axis cs:1800, 82.2) {INT8};
\node[font=\small, anchor=north east, text=paperprimary!80!black, xshift=-4pt, yshift=-2pt]
  at (axis cs:4000, 81) {W4A8};
\node[font=\small, anchor=north west, text=paperprimary!80!black, xshift=4pt, yshift=-2pt]
  at (axis cs:4800, 76.2) {W4A4};

\end{axis}
\end{tikzpicture}
\caption{Deployment design space for a representative model family on a single hardware target. Each marker represents a specific configuration of precision format, compiler, and batch size. Filled markers denote the five Pareto optimal configurations (non-dominated on both accuracy and throughput), connected by the frontier. Open markers represent the 19 dominated alternatives. Dashed lines mark minimum accuracy and throughput constraints. The shaded rectangle indicates the feasible deployment region satisfying both constraints.}
\label{fig:pareto}
\end{figure}

%% file: figures/fig_quant_effect.tex
% Figure 5: Effect of quantization on operational intensity and roofline position
\begin{figure}[t]
\centering
\begin{tikzpicture}
\begin{axis}[
  width=0.68\linewidth,
  height=0.40\linewidth,
  xlabel={Operational Intensity (FLOP/Byte)},
  ylabel={Attainable Perf.\ (TFLOP/s)},
  xmode=log, ymode=log,
  log basis x=2,
  xmin=1, xmax=512,
  ymin=1, ymax=400,
  grid=major,
  grid style={draw=papergrid!50, line width=0.3pt},
  axis line style={draw=paperaxis, thick},
  tick label style={font=\small},
  label style={font=\small}
]

% HBM bandwidth slope (2 TB/s)
\addplot[domain=1:512, samples=200, color=paperprimary, line width=1pt, solid]
  {min(2*x, 19.5)};

% FP16 TC ceiling
\addplot[domain=1:512, samples=200, color=papersecondary, line width=1pt, densely dashed]
  {min(2*x, 312)};

% Same kernel at three precisions. Starting kernel is near the FP32
% ridge (OI=8, memory-bound under FP32 ceiling). Quantization shifts it
% rightward AND upward past the FP32 ceiling into the FP16/INT8 TC regime.
%
% FP32: OI=8, perf ~65% of FP32 roof at that OI = 0.65*min(16,19.5) = 10.4
% FP16: OI=16, now above FP32 ceiling. Perf on FP16 TC slope = 0.65*2*16 = 20.8
% INT8: OI=32, well above FP32 ceiling. Perf on INT8 TC slope = 0.65*2*32 = 41.6

% FP32 kernel point (near FP32 ridge, memory-bound)
\addplot[only marks, mark=*, mark size=3.5pt, color=paperprimary, forget plot]
  coordinates {(8, 10.4)};
\node[font=\small, text=paperprimary, anchor=north, yshift=-5pt]
  at (axis cs:9.5, 10.4) {FP32};

% FP16 kernel point (above FP32 ceiling, on FP16 TC slope)
\addplot[only marks, mark=triangle*, mark size=3pt, color=papersecondary, forget plot]
  coordinates {(16, 20.8)};
\node[font=\small, text=papersecondary, anchor=west, xshift=4pt]
  at (axis cs:13, 13.8) {FP16};

% INT8 kernel point (well above FP32 ceiling, on INT8 TC slope)
\addplot[only marks, mark=square*, mark size=3pt, color=paperaccent, forget plot]
  coordinates {(32, 41.6)};
\node[font=\small, text=paperaccent, anchor=south west, xshift=3pt, yshift=2pt]
  at (axis cs:26, 19) {INT8};

% Arrow showing the full quantization shift FP32 -> INT8
\draw[-{Latex[length=2mm]}, thick, draw=paperaccent!70!black, densely dotted]
  (axis cs:8, 10.4) -- (axis cs:70, 90);
\node[font=\small, text=paperaccent!70!black, rotate=30, anchor=south, yshift=4pt]
  at (axis cs:120, 90) {Quantize};

\end{axis}
\end{tikzpicture}
\caption{Effect of precision reduction on roofline position. All three points represent the same kernel. FP32 executes on CUDA cores (blue roofline, 19.5 TFLOP/s ceiling). Quantizing to FP16 or INT8 promotes execution to Tensor Cores (red roofline, 312 TFLOP/s ceiling) while simultaneously increasing operational intensity by $2\text{--}4\times$ due to reduced bytes per element. The FP16 and INT8 points exceed the FP32 compute ceiling because they operate on a fundamentally higher performance tier of the hardware.}
\label{fig:quant_effect}
\end{figure}

%% file: figures/fig_compiler_pipeline.tex
% Figure 4: Compiler transformation pipeline
\begin{figure*}[t]
\centering
\resizebox{\textwidth}{!}{%
\begin{tikzpicture}[
  >=Latex,
  font=\sffamily\small,
  pass/.style={
    rectangle, rounded corners=2pt, thick,
    minimum height=0.9cm, minimum width=2.0cm, align=center,
    draw=#1!70!black, fill=#1!10, text=paperblack,
    font=\sffamily\scriptsize
  },
  ir/.style={
    rectangle, rounded corners=1pt, thick,
    minimum height=0.7cm, minimum width=1.4cm, align=center,
    draw=papermuted!70!black, fill=paperbg, text=paperblack,
    font=\sffamily\scriptsize
  },
  arr/.style={-{Latex[length=2.0mm]}, line width=0.9pt, draw=paperaxis},
  annot/.style={font=\sffamily\scriptsize, text=papermuted, align=center}
]

\node[ir] (onnx) {ONNX /\\StableHLO};
\node[pass=paperprimary, right=0.5cm of onnx] (gopt) {Graph\\Optimization};
\node[ir, right=0.5cm of gopt] (fused) {Fused\\IR};
\node[pass=papertertiary, right=0.5cm of fused] (kgen) {Kernel\\Generation};
\node[ir, right=0.5cm of kgen] (tuned) {Tuned\\Kernels};
\node[pass=paperaccent, right=0.5cm of tuned] (lower) {Hardware\\Lowering};
\node[ir, right=0.5cm of lower] (exec) {Exec};

\draw[arr] (onnx) -- (gopt);
\draw[arr] (gopt) -- (fused);
\draw[arr] (fused) -- (kgen);
\draw[arr] (kgen) -- (tuned);
\draw[arr] (tuned) -- (lower);
\draw[arr] (lower) -- (exec);

\node[annot, above=2pt of gopt] {Fusion, folding,\\layout transform};
\node[annot, above=2pt of kgen] {Tiling, autotuning,\\vectorization};
\node[annot, above=2pt of lower] {PTX / LLVM IR,\\register alloc};

\draw[arr, densely dashed, draw=papermuted]
  (tuned.south) -- ++(0,-0.55) -| (kgen.south)
  node[pos=0.5, below=1pt, annot] {Autotuning feedback};

\end{tikzpicture}%
}
\caption{Compiler transformation pipeline. A high level graph IR passes through graph optimization (fusion, constant folding, layout transformation), kernel generation (tiling, autotuning, vectorization), and hardware lowering (PTX emission, register allocation). The dashed loop represents autotuning iterations.}
\label{fig:compiler_pipeline}
\end{figure*}

%% file: figures/fig_queuing.tex
% Figure 6: M/M/1 queuing amplification
% Shows response time normalized by service time as a function of utilization.
\begin{figure}[t]
\centering
\begin{tikzpicture}
\begin{axis}[
  width=0.70\linewidth,
  height=0.48\linewidth,
  xlabel={Utilization $\rho$},
  ylabel={$R / T_{\text{model}}$},
  xmin=0, xmax=1,
  ymin=0, ymax=25,
  grid=major,
  grid style={draw=papergrid!50, line width=0.3pt},
  axis line style={draw=paperaxis, thick},
  tick label style={font=\small},
  label style={font=\small},
  xtick={0, 0.2, 0.4, 0.6, 0.8, 1.0},
  ytick={0, 5, 10, 15, 20, 25}
]

% M/M/1 response time: R/T = 1/(1-rho)
\addplot[domain=0.01:0.97, samples=200, color=paperprimary, line width=1.2pt, solid]
  {1/(1-x)};

% Shaded danger zone (rho > 0.8)
\fill[papersecondary, opacity=0.08]
  (axis cs:0.8,0) rectangle (axis cs:1.0,25);
\node[font=\small\itshape, text=papersecondary!80!black, rotate=90, anchor=south]
  at (axis cs:0.97,4) {Saturation};

% Annotated points
\addplot[only marks, mark=*, mark size=2.5pt, color=paperprimary, forget plot]
  coordinates {(0.5, 2) (0.8, 5) (0.9, 10) (0.95, 20)};

% Labels for annotated points
\node[font=\small, text=paperprimary, anchor=south west, xshift=2pt]
  at (axis cs:0.46, 2.2) {$2\times$};
\node[font=\small, text=paperprimary, anchor=south east, xshift=2pt]
  at (axis cs:0.79, 5) {$5\times$};
\node[font=\small, text=paperprimary, anchor=west, xshift=4pt]
  at (axis cs:0.79, 10) {$10\times$};
\node[font=\small, text=paperprimary, anchor=west, xshift=4pt]
  at (axis cs:0.84, 20) {$20\times$};

\end{axis}
\end{tikzpicture}
\caption{Queuing amplification under the M/M/1 model. Response time normalized by model service time ($R / T_{\text{model}} = 1/(1-\rho)$) diverges nonlinearly as utilization $\rho$ approaches 1. At 50\% utilization, response time is $2\times$ the service time. At 90\%, it is $10\times$. At 95\%, it is $20\times$. The shaded region marks the saturation zone where small reductions in service time yield disproportionately large gains in end-to-end response time.}
\label{fig:queuing}
\end{figure}

%% file: figures/fig_framework_latency.tex
% Figure 7: Framework latency comparison — Jetson AGX Orin
% Log-scale bar chart showing the 125x spread across runtimes for ResNet-152.
% Data from Ratul et al. (2025), Table 6.
\begin{figure}[t]
\centering
\begin{tikzpicture}
\begin{axis}[
  width=0.6\linewidth,
  height=0.4\linewidth,
  ybar,
  bar width=8pt,
  ymode=log,
  ylabel={Latency (ms, log scale)},
  xlabel={Inference Framework},
  ymin=1, ymax=1000,
  symbolic x coords={TRT, TVM, PyTorch, JAX, ORT},
  xtick=data,
  xtick style={draw=none},
  x tick label style={font=\small},
  y tick label style={font=\small},
  label style={font=\small},
  grid=major,
  grid style={draw=papergrid!40, line width=0.2pt},
  axis line style={draw=paperaxis, thick},
  legend style={font=\scriptsize, at={(0.02,0.98)}, anchor=north west, draw=papergrid!60, fill=white, rounded corners=1pt},
  legend columns=3,
  nodes near coords,
  every node near coord/.append style={font=\tiny, rotate=90, anchor=west, xshift=2pt},
  point meta=explicit symbolic,
  clip=false
]

% ResNet-152
\addplot[fill=paperprimary!40, draw=paperprimary!80!black]
  coordinates {
    (TRT, 2.28) [2.28]
    (TVM, 7.43) [7.43]
    (PyTorch, 9.24) [9.24]
    (JAX, 29.1) [29.1]
    (ORT, 285.5) [285.5]
  };

% MobileNetV2
\addplot[fill=papertertiary!40, draw=papertertiary!80!black]
  coordinates {
    (TRT, 1.14) [1.14]
    (TVM, 9.53) [9.53]
    (PyTorch, 4.20) [4.20]
    (JAX, 0) []
    (ORT, 0) []
  };

% Swin-T
\addplot[fill=papersecondary!40, draw=papersecondary!80!black]
  coordinates {
    (TRT, 3.95) [3.95]
    (TVM, 5.28) [5.28]
    (PyTorch, 7.27) [7.27]
    (JAX, 0) []
    (ORT, 0) []
  };

\legend{ResNet-152, MobileNetV2, Swin-T}

\draw[paperprimary, densely dashed, thin]
  ([xshift=-15pt] axis cs:TRT, 285) -- ([xshift=30pt] axis cs:JAX, 285);

% Annotation: 125x spread
\draw[paperprimary, thick, <->]
  ([xshift=-10pt] axis cs:TRT, 6) -- node[right, font=\scriptsize, text=paperprimary, xshift=0pt] {$125\times$}
  ([xshift=-10pt] axis cs:TRT, 285.5);

\end{axis}
\end{tikzpicture}
\caption{Inference latency across five frameworks on Jetson AGX Orin (log scale). ResNet-152 exhibits a $125\times$ spread between TensorRT and ONNX Runtime. MobileNetV2 and Swin-T show progressively smaller spreads ($8.4\times$ and $1.9\times$ respectively), confirming that compiler optimization headroom is model dependent. Missing bars indicate frameworks not evaluated for that model. Data from Ratul et al.\ \cite{ratul2025electronics}.}
\label{fig:framework_latency}
\end{figure}

%% file: figures/fig_energy.tex
% Figure 8: Energy per inference — Jetson AGX Orin
% Grouped bar chart showing E = P * T.
% Highlights the Swin-T TVM vs TRT reversal with dimension-style annotation.
% Data derived from Table 6 (Ratul et al. 2025).
\begin{figure}[t]
\centering
\begin{tikzpicture}
\begin{axis}[
  width=0.6\linewidth,
  height=0.4\linewidth,
  ybar,
  bar width=8pt,
  ylabel={Energy per Inference (J)},
  ymin=0, ymax=0.27,
  symbolic x coords={TRT, TVM, PyTorch},
  xtick=data,
  xtick style={draw=none},
  x tick label style={font=\small},
  y tick label style={font=\small},
  label style={font=\small},
  grid=major,
  grid style={draw=papergrid!40, line width=0.2pt},
  axis line style={draw=paperaxis, thick},
  legend style={font=\scriptsize, at={(0.02,0.98)}, anchor=north west, draw=papergrid!60, fill=white, rounded corners=1pt},
  legend columns=3,
  nodes near coords,
  every node near coord/.append style={font=\tiny, rotate=90, anchor=west},
  point meta=explicit symbolic,
  enlarge x limits=0.25,
  clip=false
]

% ResNet-152 energy
\addplot[fill=paperprimary!40, draw=paperprimary!80!black]
  coordinates {
    (TRT, 0.065) [0.065]
    (TVM, 0.163) [0.163]
    (PyTorch, 0.201) [0.201]
  };

% MobileNetV2 energy
\addplot[fill=papertertiary!40, draw=papertertiary!80!black]
  coordinates {
    (TRT, 0.015) [0.015]
    (TVM, 0.132) [0.132]
    (PyTorch, 0.060) [0.060]
  };

% Swin-T energy
\addplot[fill=papersecondary!40, draw=papersecondary!80!black]
  coordinates {
    (TRT, 0.108) [0.108]
    (TVM, 0.085) [\textbf{0.085}]
    (PyTorch, 0.137) [0.137]
  };

\legend{ResNet-152, MobileNetV2, Swin-T}

% Dimension-style reversal annotation
% Vertical dashed lines from center of Swin-T bars at TRT and TVM
% Swin-T is the 3rd series, so offset xshift=+9pt at TRT, xshift=+9pt at TVM
\draw[paperaccent, densely dashed, thin]
  ([xshift=10pt] axis cs:TRT, 0.15) -- ([xshift=10pt] axis cs:TRT, 0.22);
\draw[paperaccent, densely dashed, thin]
  ([xshift=10pt] axis cs:TVM, 0.125) -- ([xshift=10pt] axis cs:TVM, 0.22);
% Horizontal dimension arrow between the two vertical lines
\draw[paperaccent, thick, stealth-stealth]
  ([xshift=10pt] axis cs:TRT, 0.21) -- 
  node[above, font=\scriptsize, text=paperaccent] {Reversal: TVM wins}
  ([xshift=10pt] axis cs:TVM, 0.21);

\end{axis}
\end{tikzpicture}
\caption{Derived energy per inference ($E = P \cdot T$) on Jetson AGX Orin across three model families and three frameworks. For ResNet-152 and MobileNetV2, TensorRT achieves the lowest energy despite the highest power draw, because its latency advantage dominates. For Swin-T, the ranking reverses (dimension annotation): TVM consumes less energy (0.085\,J) than TensorRT (0.108\,J) due to substantially lower power draw (16.0 vs 27.4\,W), demonstrating that energy-optimal and latency-optimal configurations can differ. Data derived from Ratul et al.\ \cite{ratul2025electronics}.}
\label{fig:energy}
\end{figure}

%% file: figures/fig_datacenter_frameworks.tex
% Figure 12: Data center LLM serving framework comparison — H100 SXM 80GB
% Dual-axis grouped bar chart: throughput (left) and TTFT p50 (right)
% Data from Spheron benchmarks \cite{spheron2026benchmark}.

\begin{figure}[t]
\centering
\begin{tikzpicture}
\begin{axis}[
  width=0.6\linewidth,
  height=0.4\linewidth,
  ybar,
  bar width=8pt,
  ylabel={Throughput (output tok/s)},
  xlabel={Concurrent Requests},
  ymin=0, ymax=3200,
  symbolic x coords={1, 10, 50, 100},
  xtick=data,
  xtick style={draw=none},
  x tick label style={font=\small},
  y tick label style={font=\small},
  label style={font=\small},
  grid=major,
  grid style={draw=papergrid!40, line width=0.2pt},
  axis line style={draw=paperaxis, thick},
  legend style={font=\scriptsize, at={(0.02,0.98)}, anchor=north west, draw=papergrid!60, fill=white, rounded corners=1pt},
  legend columns=3,
  nodes near coords,
  every node near coord/.append style={font=\tiny, rotate=90, anchor=west, xshift=0pt, yshift=1pt},
  point meta=explicit symbolic,
  enlarge x limits=0.18,
  clip=false
]

% vLLM throughput
\addplot[fill=paperprimary!40, draw=paperprimary!80!black]
  coordinates {
    (1, 120) [120]
    (10, 650) [650]
    (50, 1850) [1850]
    (100, 2400) [2400]
  };

% TensorRT-LLM throughput
\addplot[fill=papersecondary!40, draw=papersecondary!80!black]
  coordinates {
    (1, 130) [130]
    (10, 710) [710]
    (50, 2100) [2100]
    (100, 2780) [2780]
  };

% SGLang throughput
\addplot[fill=papertertiary!40, draw=papertertiary!80!black]
  coordinates {
    (1, 125) [125]
    (10, 680) [680]
    (50, 1920) [1920]
    (100, 2460) [2460]
  };

\legend{vLLM, TensorRT-LLM, SGLang}

\end{axis}
\end{tikzpicture}
\caption{LLM serving throughput on H100 SXM 80\,GB across three frameworks at increasing concurrency. TensorRT-LLM leads at all concurrency levels, but the gap compresses from $\sim$13\% at 50 concurrent requests to $\sim$12\% at 100, consistent with the queuing model prediction (Section~5.2) that system-level dynamics dominate compiler-level differences near saturation. Data from Spheron \cite{spheron2026benchmark}.}
\label{fig:datacenter_throughput}
\end{figure}

%% file: figures/fig_evidence_coverage.tex
% Figure 10: Evidence coverage radar chart
% Shows which metrics are well-reported vs underreported in the literature.
\begin{figure}[t]
\centering
\begin{tikzpicture}[scale=0.85]

% Precompute axis directions (7 axes, 360/7 ≈ 51.43 deg)
% Axis 1: Latency (90°), 2: Throughput (38.57°), 3: Accuracy (-12.86°),
% 4: Memory (-64.29°), 5: Power (-115.71°), 6: Energy (-167.14°), 7: Compilation (-218.57° = 141.43°)
\def\rscale{0.45}

% Grid rings
\foreach \r in {1,2,3,4,5} {
  \pgfmathsetmacro{\rad}{\r * \rscale}
  \draw[papergrid!40, very thin]
    (90:\rad) -- (38.57:\rad) -- (-12.86:\rad) -- (-64.29:\rad)
    -- (-115.71:\rad) -- (-167.14:\rad) -- (141.43:\rad) -- cycle;
}

% Axis spokes
\foreach \ang in {90, 38.57, -12.86, -64.29, -115.71, -167.14, 141.43} {
  \draw[papergrid, thin] (0,0) -- (\ang:5*\rscale);
}

% Axis labels
\node[font=\small, text=paperaxis] at (90:5*\rscale+0.65) {Latency};
\node[font=\small, text=paperaxis] at (38.57:5*\rscale+0.7) {Throughput};
\node[font=\small, text=paperaxis] at (-12.86:5*\rscale+0.8) {Accuracy};
\node[font=\small, text=paperaxis] at (-64.29:5*\rscale+0.5) {Memory};
\node[font=\small, text=paperaxis] at (-115.71:5*\rscale+0.4) {Power};
\node[font=\small, text=paperaxis] at (-167.14:5*\rscale+0.65) {Energy};
\node[font=\small, text=paperaxis] at (141.43:5*\rscale+0.7) {Compilation};

% Scale labels
\node[font=\tiny, text=papermuted, anchor=south east] at (90:1*\rscale) {1};
\node[font=\tiny, text=papermuted, anchor=south east] at (90:3*\rscale) {3};
\node[font=\tiny, text=papermuted, anchor=south east] at (90:5*\rscale) {5};

% Literature coverage polygon
% Latency=5, Throughput=4, Accuracy=5, Memory=3, Power=2, Energy=1, Compilation=1
\filldraw[paperprimary, fill=paperprimary, fill opacity=0.15, draw opacity=0.8, thick]
  (90:5*\rscale) -- (38.57:4*\rscale) -- (-12.86:5*\rscale) -- (-64.29:3*\rscale)
  -- (-115.71:2*\rscale) -- (-167.14:1*\rscale) -- (141.43:1*\rscale) -- cycle;

% Data points
\fill[paperprimary] (90:5*\rscale) circle (2pt);
\fill[paperprimary] (38.57:4*\rscale) circle (2pt);
\fill[paperprimary] (-12.86:5*\rscale) circle (2pt);
\fill[paperprimary] (-64.29:3*\rscale) circle (2pt);
\fill[paperprimary] (-115.71:2*\rscale) circle (2pt);
\fill[paperprimary] (-167.14:1*\rscale) circle (2pt);
\fill[paperprimary] (141.43:1*\rscale) circle (2pt);

% Ideal coverage polygon (all 5s, dashed red)
\draw[papersecondary, densely dashed, thick, opacity=0.5]
  (90:5*\rscale) -- (38.57:5*\rscale) -- (-12.86:5*\rscale) -- (-64.29:5*\rscale)
  -- (-115.71:5*\rscale) -- (-167.14:5*\rscale) -- (141.43:5*\rscale) -- cycle;

% Legend
\node[font=\scriptsize, text=paperprimary, anchor=west] at (2.6, -2.0) {\rule{10.5pt}{1pt}\ Reported in literature};
\node[font=\scriptsize, text=papersecondary, anchor=west] at (2.6, -2.35) {- - - Required for PEI ranking};

% Gap annotations on Energy and Compilation axes
\draw[paperaccent, thick, -stealth]
  (-167.14:1*\rscale+0.15) -- node[left, font=\tiny, text=paperaccent, yshift=-8pt, xshift=5pt] {Gap \ } (-167.14:5*\rscale-0.15);
\draw[paperaccent, thick, -stealth]
  (141.43:1*\rscale+0.15) -- node[left, font=\tiny, text=paperaccent] {Gap \ } (141.43:5*\rscale-0.15);

\end{tikzpicture}
\caption{Evidence coverage across the seven deployment metrics (Section~4). The blue polygon shows how frequently each metric is reported in the deployment literature (1 = rare, 5 = ubiquitous). The dashed red polygon indicates the coverage required for complete PEI ranking (Eq.~\ref{eq:pei}). Latency, throughput, and accuracy are well-reported. Power and memory are partially covered. Energy and compilation cost are systematically underreported, creating the gaps identified in Section~7.5.}
\label{fig:evidence_coverage}
\end{figure}

%% file: figures/fig_decision_flow.tex
% Figure 9: Deployment decision flowchart
% Staged measurement and scenario-based selection.
% Increased vertical spacing to make dashed stage separators visible.
\begin{figure}[!t]
\centering
\begin{tikzpicture}[
  node distance=1.0cm and 1.2cm,
  startstop/.style={rectangle, rounded corners, draw=paperaxis, thick, fill=paperprimary!8, minimum width=2.2cm, minimum height=0.7cm, font=\small, align=center},
  process/.style={rectangle, draw=paperaxis, thick, fill=paperprimary!20, minimum width=2.4cm, minimum height=0.7cm, font=\small, align=center},
  decision/.style={diamond, draw=paperaxis, thick, fill=orange!10, minimum width=1.6cm, minimum height=0.8cm, font=\small, align=center, aspect=2.2},
  arrow/.style={-stealth, thick, draw=paperaxis!80},
  lbl/.style={font=\scriptsize, text=papermuted}
]

% Start
\node[startstop] (start) {Candidate set $\mathcal{C}$};

% Stage 1
\node[decision, below=1.0cm of start] (mem) {$M(c) \le M_{\max}$?};
\draw[arrow] (start) -- (mem);
\node[startstop, right=1.5cm of mem, fill=papersecondary!20] (reject1) {Eliminate};
\draw[arrow] (mem) -- node[lbl, above] {No} (reject1);

% --- Dashed separator 1 (between Stage 1 and Stage 2) ---
\coordinate (sep1L) at ([xshift=-2.5cm, yshift=-0.35cm]mem.south);
\coordinate (sep1R) at ([xshift=1.5cm, yshift=-0.75cm]reject1.south);
\draw[black!50, densely dashed, line width=0.6pt] (sep1L) -- (sep1R);
\node[font=\scriptsize\bfseries, text=papermuted, anchor=east] at (sep1L) {Stage 1: Free};

% Stage 2
\node[decision, below=1.4cm of mem] (acc) {$A(c) \ge A_{\min}$?};
\draw[arrow] (mem) -- node[lbl, right] {Yes} (acc);
\node[startstop, right=1.5cm of acc, fill=papersecondary!20] (reject2) {Eliminate};
\draw[arrow] (acc) -- node[lbl, above] {No} (reject2);

% --- Dashed separator 2 (between Stage 2 and Stage 3) ---
\coordinate (sep2L) at ([xshift=-2.5cm, yshift=-0.35cm]acc.south);
\coordinate (sep2R) at ([xshift=1.5cm, yshift=-0.75cm]reject2.south);
\draw[black!50, densely dashed, line width=0.6pt] (sep2L) -- (sep2R);
\node[font=\scriptsize\bfseries, text=papermuted, anchor=east] at (sep2L) {Stage 2: Cheap};

% Stage 3: Measure
\node[process, below=1.4cm of acc] (measure) {Measure $T(c)$, $P(c)$\\on target hardware};
\draw[arrow] (acc) -- node[lbl, right] {Yes} (measure);

% Stage 3: Filter
\node[decision, below=1.0cm of measure] (lat) {$T(c) \le T_{\max}$?\\$E(c) \le E_{\max}$?};
\draw[arrow] (measure) -- (lat);
\node[startstop, right=1.5cm of lat, fill=papersecondary!20] (reject3) {Eliminate};
\draw[arrow] (lat) -- node[lbl, above] {No} (reject3);

% --- Dashed separator 3 (between Stage 3 and Ranking) ---
\coordinate (sep3L) at ([xshift=-2.5cm, yshift=-0.25cm]lat.south);
\coordinate (sep3R) at ([xshift=1.5cm, yshift=-0.85cm]reject3.south);
\draw[black!50, densely dashed, line width=0.6pt] (sep3L) -- (sep3R);
\node[font=\scriptsize\bfseries, text=papermuted, anchor=east] at (sep3L) {Stage 3: Expensive};

% Pareto + PEI
\node[process, below=1.4cm of lat] (pareto) {Pareto filter $\mathcal{P}$\\Rank by PEI};
\draw[arrow] (lat) -- node[lbl, right] {Yes} (pareto);

% Scenario split
\node[startstop, below left=0.8cm and 0.8cm of pareto, fill=papertertiary!25] (s1) {Latency-critical:\\min $T(c)$};
\node[startstop, below=0.8cm of pareto, fill=papertertiary!25] (s2) {Throughput-critical:\\max $Q(c)$ under SLA};
\node[startstop, below right=0.8cm and 0.8cm of pareto, fill=papertertiary!25] (s3) {Energy-critical:\\min $E(c)$};

\draw[arrow] (pareto) -- (s1);
\draw[arrow] (pareto) -- (s2);
\draw[arrow] (pareto) -- (s3);

\end{tikzpicture}
\caption{Deployment decision flowchart implementing the staged measurement protocol of Section~8.1 and the constraint-aware selection of Algorithm~\ref{alg:selection}. Candidates are eliminated by free constraints (memory), then cheap constraints (accuracy), before expensive on-device measurements (latency, power) are performed. Survivors are Pareto-filtered and ranked by PEI under the application-specific scenario. Dashed lines mark stage boundaries.}
\label{fig:decision_flow}
\end{figure}

%% file: figures/fig_research_roadmap.tex
% Figure 11: Research roadmap
\begin{figure*}[t]
\centering
\begin{tikzpicture}[node distance=0.25cm]

% Category 1 header
\node[rectangle, rounded corners=2pt, draw=none, fill=paperaccent!15,
  minimum width=8.4cm, minimum height=0.5cm,
  font=\scriptsize\bfseries\scshape, text=paperaccent!70!black]
  (cat1) at (0, -0.12) {Cross-Layer Research Problems};

% === P1 ===
\node[rectangle, rounded corners=4pt, draw=paperaccent!60!black, line width=0.6pt,
  fill=paperaccent!4, minimum width=8.0cm, minimum height=1.3cm,
  inner sep=0pt, below=0.18cm of cat1] (p1) {};
\node[font=\small\bfseries, anchor=north west] at ([xshift=0.2cm, yshift=-0.12cm]p1.north west)
  {\S9.1 Joint Compiler-Serving Co-optimization};
\node[font=\scriptsize, text=paperaxis!70, anchor=north west] at ([xshift=0.3cm, yshift=-0.42cm]p1.north west)
  {Co-design precision, fusion, batching. $10^5$ combos/operator.};
\node[rectangle, rounded corners=2pt, draw=papertertiary!50, fill=papertertiary!10,
  font=\tiny\bfseries, text=papertertiary!70!black, inner sep=2pt,
  anchor=south west] at ([xshift=0.3cm, yshift=0.1cm]p1.south west) {Compiler};
\node[rectangle, rounded corners=2pt, draw=papersecondary!50, fill=papersecondary!10,
  font=\tiny\bfseries, text=papersecondary!70!black, inner sep=2pt,
  anchor=south west] at ([xshift=1.65cm, yshift=0.1cm]p1.south west) {System};

% === P2 ===
\node[rectangle, rounded corners=4pt, draw=paperprimary!60!black, line width=0.6pt,
  fill=paperprimary!3, minimum width=8.0cm, minimum height=1.3cm,
  inner sep=0pt, below=0.2cm of p1] (p2) {};
\node[font=\small\bfseries, anchor=north west] at ([xshift=0.2cm, yshift=-0.12cm]p2.north west)
  {\S9.2 Cross-Hardware Latency Prediction};
\node[font=\scriptsize, text=paperaxis!70, anchor=north west] at ([xshift=0.3cm, yshift=-0.42cm]p2.north west)
  {A100$\to$H100 without re-profiling. Transfer learning.};
\node[rectangle, rounded corners=2pt, draw=paperprimary!50, fill=paperprimary!10,
  font=\tiny\bfseries, text=paperprimary!70!black, inner sep=2pt,
  anchor=south west] at ([xshift=0.3cm, yshift=0.1cm]p2.south west) {Model};
\node[rectangle, rounded corners=2pt, draw=papertertiary!50, fill=papertertiary!10,
  font=\tiny\bfseries, text=papertertiary!70!black, inner sep=2pt,
  anchor=south west] at ([xshift=1.35cm, yshift=0.1cm]p2.south west) {Compiler};
\node[rectangle, rounded corners=2pt, draw=papersecondary!50, fill=papersecondary!10,
  font=\tiny\bfseries, text=papersecondary!70!black, inner sep=2pt,
  anchor=south west] at ([xshift=2.6cm, yshift=0.1cm]p2.south west) {System};

% === P5 ===
\node[rectangle, rounded corners=4pt, draw=papersecondary!60!black, line width=0.6pt,
  fill=papersecondary!3, minimum width=8.0cm, minimum height=1.3cm,
  inner sep=0pt, below=0.2cm of p2] (p5) {};
\node[font=\small\bfseries, anchor=north west] at ([xshift=0.2cm, yshift=-0.12cm]p5.north west)
  {\S9.5 LLM-Specific Deployment Models};
\node[font=\scriptsize, text=paperaxis!70, anchor=north west] at ([xshift=0.3cm, yshift=-0.42cm]p5.north west)
  {Two-phase roofline, heavy-tailed queuing, KV-aware admission.};
\node[rectangle, rounded corners=2pt, draw=paperprimary!50, fill=paperprimary!10,
  font=\tiny\bfseries, text=paperprimary!70!black, inner sep=2pt,
  anchor=south west] at ([xshift=0.3cm, yshift=0.1cm]p5.south west) {Model};
\node[rectangle, rounded corners=2pt, draw=papersecondary!50, fill=papersecondary!10,
  font=\tiny\bfseries, text=papersecondary!70!black, inner sep=2pt,
  anchor=south west] at ([xshift=1.35cm, yshift=0.1cm]p5.south west) {System};

% Category 2 header
\node[rectangle, rounded corners=2pt, draw=none, fill=papermuted!15,
  minimum width=8.4cm, minimum height=0.32cm,
  font=\scriptsize\bfseries\scshape, text=papermuted!70!black,
  below=0.35cm of p5] (cat2) {Methodology and Infrastructure};

% === P3 ===
\node[rectangle, rounded corners=4pt, draw=papermuted!50!black, line width=0.6pt,
  fill=papermuted!3, minimum width=8.0cm, minimum height=1.3cm,
  inner sep=0pt, below=0.18cm of cat2] (p3) {};
\node[font=\small\bfseries, anchor=north west] at ([xshift=0.2cm, yshift=-0.12cm]p3.north west)
  {\S9.3 Standardized Energy and Variance Reporting};
\node[font=\scriptsize, text=paperaxis!70, anchor=north west] at ([xshift=0.3cm, yshift=-0.42cm]p3.north west)
  {5-item standard: power, percentiles, energy, compilation, metadata.};
\node[rectangle, rounded corners=2pt, draw=papermuted!50, fill=papermuted!10,
  font=\tiny\bfseries, text=papermuted!70!black, inner sep=2pt,
  anchor=south west] at ([xshift=0.3cm, yshift=0.1cm]p3.south west) {All layers};

% === P4 ===
\node[rectangle, rounded corners=4pt, draw=papermuted!50!black, line width=0.6pt,
  fill=papermuted!3, minimum width=8.0cm, minimum height=1.3cm,
  inner sep=0pt, below=0.2cm of p3] (p4) {};
\node[font=\small\bfseries, anchor=north west] at ([xshift=0.2cm, yshift=-0.12cm]p4.north west)
  {\S9.4 Reproducible Benchmark Artifacts};
\node[font=\scriptsize, text=paperaxis!70, anchor=north west] at ([xshift=0.3cm, yshift=-0.42cm]p4.north west)
  {Containerized specs pinning drivers, toolkit, weights.};
\node[rectangle, rounded corners=2pt, draw=papermuted!50, fill=papermuted!10,
  font=\tiny\bfseries, text=papermuted!70!black, inner sep=2pt,
  anchor=south west] at ([xshift=0.3cm, yshift=0.1cm]p4.south west) {All layers};

% Legend
\node[font=\tiny, text=paperaxis, anchor=north] at ([yshift=-0.35cm]p4.south)
  {\textcolor{paperprimary}{\rule{6pt}{6pt}}\,Model\enspace
   \textcolor{papertertiary}{\rule{6pt}{6pt}}\,Compiler\enspace
   \textcolor{papersecondary}{\rule{6pt}{6pt}}\,System\enspace
   \textcolor{papermuted}{\rule{6pt}{6pt}}\,All layers};

\end{tikzpicture}
\caption{Research roadmap for the five open problems identified in Section~9. Each card describes the technical challenge and colored tags indicate affected deployment layers.}
\label{fig:research_roadmap}
\end{figure*}